\documentclass[preprint,1p,11pt]{IR-Template/ISAS_IR}

\pdfoutput=1

%%%%%%%%%%%%%%%%%%%%%%%%%%%%%%%%%%%%%%%%%%%%%%%%%%%%%%%%%%%%%%%%%%%%%%%%%%%%%
%%
%%  Standard Packages
%%
%%%%%%%%%%%%%%%%%%%%%%%%%%%%%%%%%%%%%%%%%%%%%%%%%%%%%%%%%%%%%%%%%%%%%%%%%%%%%

\usepackage[latin1]{inputenc}
\usepackage{calc}
\usepackage{stfloats} 
\usepackage{flushend}
\usepackage{calc}
\usepackage{microtype}
\usepackage{amsmath}
\usepackage{amssymb}
\usepackage{graphicx}
\usepackage{psfrag}
\usepackage{verbatim}
\usepackage{multirow}
\usepackage{algorithm2e}
\usepackage{blkarray}
\usepackage{lmodern}
\usepackage{wrapfig}
\usepackage{paralist}
\usepackage{trfsigns}
\usepackage{version}

%%%%%%%%%%%%%%%%%%%%%%%%%%%%%%%%%%%%%%%%%%%%%%%%%%%%%%%%%%%%%%%%%%%%%%%%%%%%%
%%
%%  Formatting Commands
%%
%%%%%%%%%%%%%%%%%%%%%%%%%%%%%%%%%%%%%%%%%%%%%%%%%%%%%%%%%%%%%%%%%%%%%%%%%%%%%

\date{}

\clubpenalty=100000
\widowpenalty=100000
\brokenpenalty=100000
\finalhyphendemerits=100000

%%%%%%%%%%%%%%%%%%%%%%%%%%%%%%%%%%%%%%%%%%%%%%%%%%%%%%%%%%%%%%%%%%%%%%%%%%%%%
%%
%% Mathematical Formulae
%%
%%%%%%%%%%%%%%%%%%%%%%%%%%%%%%%%%%%%%%%%%%%%%%%%%%%%%%%%%%%%%%%%%%%%%%%%%%%%%

\def\vec#1{\underline{#1}}

\def\1_2{{\frac{1}{2}}}

\newcommand{\rv}[1]{\ensuremath{\boldsymbol{#1}}}

\DeclareMathOperator{\xlog}{xlog}
\DeclareMathOperator{\Ei}{Ei}

%%%%%%%%%%%%%%%%%%%%%%%%%%%%%%%%%%%%%%%%%%%%%%%%%%%%%%%%%%%%%%%%%%%%%%%%%%%%%
%%
%% Abbreviations
%%
%%%%%%%%%%%%%%%%%%%%%%%%%%%%%%%%%%%%%%%%%%%%%%%%%%%%%%%%%%%%%%%%%%%%%%%%%%%%%

\def\SND{standard Normal distribution}

 % {{\rm I\hspace{-.17em}N}}
\def\NewR{\mathbb{R}} % {{\rm I\hspace{-.17em}R}}
 % {{\sf Z\hspace{-.35em}Z}}
 % {{\rm I\hspace{-.17em}B}}

\newcommand{\GDs}{Gaussian densities}
\newcommand{\GM}{Gaussian mixture}
\newcommand{\GMD}{\GM{} density}

\newcommand{\DC}{Dirac component}
\newcommand{\DM}{Dirac mixture}
\newcommand{\DMC}{\DM{} component}
\def\DMA{\DM{} approximation}
\def\DMD{\DM{} density}
\def\DMDs{\DM{} densities}

\def\LCD{Localized Cumulative Distribution}
\def\CvM{Cram\'{e}r-von Mises}
\def\CvMD{\CvM{} Distance}

%%%%%%%%%%%%%%%%%%%%%%%%%%%%%%%%%%%%%%%%%%%%%%%%%%%%%%%%%%%%%%%%%%%%%%%%%%%%%
%%
%% Cross references
%%
%%%%%%%%%%%%%%%%%%%%%%%%%%%%%%%%%%%%%%%%%%%%%%%%%%%%%%%%%%%%%%%%%%%%%%%%%%%%%

\def\Eq#1{(\ref{#1})}

\def\Sec#1{Sec.~\ref{#1}}

\def\Fig#1{Fig.~\ref{#1}}

\def\Appendix#1{Appendix~\ref{#1}}

\def\Theorem#1{Theorem~\ref{#1}}

%%%%%%%%%%%%%%%%%%%%%%%%%%%%%%%%%%%%%%%%%%%%%%%%%%%%%%%%%%%%%%%%%%%%%%%%%%%%%
%%
%% Adaptive capitalization
%%
%%%%%%%%%%%%%%%%%%%%%%%%%%%%%%%%%%%%%%%%%%%%%%%%%%%%%%%%%%%%%%%%%%%%%%%%%%%%%

\sfcode`\0=1001
\sfcode`\1=1001
\sfcode`\2=1001
\sfcode`\3=1001
\sfcode`\4=1001
\sfcode`\5=1001
\sfcode`\6=1001
\sfcode`\7=1001
\sfcode`\8=1001
\sfcode`\9=1001
\sfcode`\.=1001
\sfcode`\?=1001
\sfcode`\!=1001
\sfcode`\:=1001

\newcommand\AdaptiveCapitalization[1]{\ifnum\ifhmode\spacefactor\else2000\fi>1000 \uppercase{#1}\else#1\fi}

\newcommand{\cM}{\AdaptiveCapitalization{c}ircular moment}
\newcommand{\vMd}{\AdaptiveCapitalization{v}on Mises distribution}
\newcommand{\WNd}{\AdaptiveCapitalization{w}rapped Normal distribution}
\newcommand{\pdf}{\AdaptiveCapitalization{p}robability density function}
\newcommand{\cpdf}{\AdaptiveCapitalization{c}ircular probability density function}

\usepackage{color}

%%%%%%%%%%%%%%%%%%%%%%%%%%%%%%%%%%%%%%%%%%%%%%%%%%%%%%%%%%%%%%%%%%%%%%%%%%%%%
%%
%% Formatting of fboxes
%%
%%%%%%%%%%%%%%%%%%%%%%%%%%%%%%%%%%%%%%%%%%%%%%%%%%%%%%%%%%%%%%%%%%%%%%%%%%%%%

\fboxrule1pt
\fboxsep3mm

%%%%%%%%%%%%%%%%%%%%%%%%%%%%%%%%%%%%%%%%%%%%%%%%%%%%%%%%%%%%%%%%%%%%%%%%%%%%%
%%
%% Scaling of Formulae
%%
%%%%%%%%%%%%%%%%%%%%%%%%%%%%%%%%%%%%%%%%%%%%%%%%%%%%%%%%%%%%%%%%%%%%%%%%%%%%%

\newlength\EqLen

\def\ScaleInner#1{%
\settowidth{\EqLen}{#1}
\ifdim\EqLen < \columnwidth%
\begin{equation*}%
\begin{minipage}{\EqLen}#1\end{minipage}%
\end{equation*}%
\else%
\begin{equation*}%
\resizebox{0.99\columnwidth}{!}{\begin{minipage}{\EqLen}#1\end{minipage}}%
\end{equation*}%
\fi%
}%

\def\Scale#1
  {
  \ScaleInner{$ #1 $} 
  }
  
%%%%%%%%%%%%%%%%%%%%%%%%%%%%%%%%%%%%%%%%%%%%%%%%%%%%%%%%%%%%%%%%%%%%%%%%%%%%%
%%
%% Versioning
%%
%%%%%%%%%%%%%%%%%%%%%%%%%%%%%%%%%%%%%%%%%%%%%%%%%%%%%%%%%%%%%%%%%%%%%%%%%%%%%

\def\LongVersion#1{}

%%%%%%%%%%%%%%%%%%%%%%%%%%%%%%%%%%%%%%%%%%%%%%%%%%%%%%%%%%%%%%%%%%%%%%%%%%%%%
%%
%% Citation
%%
%%%%%%%%%%%%%%%%%%%%%%%%%%%%%%%%%%%%%%%%%%%%%%%%%%%%%%%%%%%%%%%%%%%%%%%%%%%%%

\def\citep#1{(\cite{#1})}

%%%%%%%%%%%%%%%%%%%%%%%%%%%%%%%%%%%%%%%%%%%%%%%%%%%%%%%%%%%%%%%%%%%%%%%%%%%%%
%%
%% Figures
%%
%%%%%%%%%%%%%%%%%%%%%%%%%%%%%%%%%%%%%%%%%%%%%%%%%%%%%%%%%%%%%%%%%%%%%%%%%%%%%

%

%

%%%%%%%%%%%%%%%%%%%%%%%%%%%%%%%%%%%%%%%%%%%%%%%%%%%%%%%%%%%%%%%%%%%%%%%%%%%%%
%%
%% Theorems
%%
%%%%%%%%%%%%%%%%%%%%%%%%%%%%%%%%%%%%%%%%%%%%%%%%%%%%%%%%%%%%%%%%%%%%%%%%%%%%%

\usepackage{amsthm}

% Theorem styles
\newtheorem{theorem}{Theorem}[section]

\newtheorem{corollary}[theorem]{Corollary}
% Definition styles
\theoremstyle{definition}
\newtheorem{definition}{Definition}[section]

% Remark styles
\theoremstyle{remark}
\newtheorem{remark}{Remark}[section]

% Call bibtex at every latex run, has to be configured with pdflatex -enable-write18
% 
% Comment out for arxiv automated processing:
%
% \immediate\write18{bibtex8 \jobname}
%

\begin{document}

\begin{frontmatter}

\title{Optimal Reduction of Multivariate\\Dirac Mixture Densities}

\author{Uwe~D.~Hanebeck}
\ead{uwe.hanebeck@ieee.org}

\address{Intelligent Sensor-Actuator-Systems Laboratory (ISAS)\\
Institute for Anthropomatics and Robotics\\
Karlsruhe Institute of Technology (KIT), Germany}

\begin{abstract}
%
% Topic
%
This paper is concerned with the optimal approximation of a given multivariate Dirac mixture, i.e., a density comprising weighted Dirac distributions on a continuous domain, by an equally weighted Dirac mixture with a reduced number of components.
%
% Multivariate generalization of Cramer-von Mises distance
%
The parameters of the approximating density are calculated by minimizing a smooth global distance measure, a generalization of the well-known \CvMD{} to the multivariate case. 
%
% New definition of cumulative distribution well suited for multivariate problems
%
This generalization is achieved by defining an alternative to the classical cumulative distribution, the \LCD{} (LCD), as a characterization of discrete random quantities (on continuous domains), which is unique and symmetric also in the multivariate case. 
%
% Basis for nonlinear estimation
%
The resulting approximation method provides the basis for various efficient nonlinear state and parameter estimation methods.
\end{abstract}

\end{frontmatter}

%%%%%%%%%%%%%%%%%%%%%%%%%%%%%%%%%%%%%%%%%%%%%%%%%%%%%%%%%%%%%%%%%%%%%%%%%%%%%%%%%%%%%%%%%%%%%%%%%%%%%%%%%%%%%%%%%%%%%%%%%%%%%%%%%%%%%%%%%%%%%
%
% Introduction
%
%%%%%%%%%%%%%%%%%%%%%%%%%%%%%%%%%%%%%%%%%%%%%%%%%%%%%%%%%%%%%%%%%%%%%%%%%%%%%%%%%%%%%%%%%%%%%%%%%%%%%%%%%%%%%%%%%%%%%%%%%%%%%%%%%%%%%%%%%%%%%

\section{Introduction} \label{Sec_Introduction}

    %%%%%%%%%%%%%%%%%%%%%%%%%%%%%%%%%%%%%%%%%%%%%%%%%%%%%%%%%%%%%%%%%%%%%%%%%%%%%%%%%%%%%%%%%%%%%%%%%%%%%%%%%%%%%%%%%%%%%%%%%%%%%%%%%%%%%%%%%%%%%
%
% Motivation
%
%%%%%%%%%%%%%%%%%%%%%%%%%%%%%%%%%%%%%%%%%%%%%%%%%%%%%%%%%%%%%%%%%%%%%%%%%%%%%%%%%%%%%%%%%%%%%%%%%%%%%%%%%%%%%%%%%%%%%%%%%%%%%%%%%%%%%%%%%%%%%

\subsection{Motivation} \label{SubSec_Motivation}

	%
% Point sets
%

We consider point sets ${\cal P}_x=\{\vec{x}_1, \vec{x}_2, \ldots, \vec{x}_L\}$ where the points $\vec{x}_i$, $i=1, \ldots, L$ are arbitrarily placed in $\NewR^N$, i.e., $\vec{x}_i \in \NewR^N$, $i=1, \ldots, L$. The points can be of arbitrary origin, e.g., be samples from a \pdf{}, and can be interpreted as being equally weighted or be equipped with weights.
%
% Weighted point sets
%
When they are weighted, we assume the weights $w_i$ associated with point $\vec{x}_i$ for $i=1, \ldots, L$ to be positive and sum up to one.

%
% Interpretation as density: Dirac mixture
%
The point sets are interpreted as discrete \pdf s over a continuous domain, where the individual points correspond to locations of Dirac distributions with associated weights. The point set will be called a \DMD{}.
%
% Representation popular
%
\DMDs{} are a popular representation of densities in stochastic nonlinear filters such as particle filters.
%
% Details on representation: How is density represented?
%
They characterize random vectors in a similar way as sets of sample points by having a large number of components with large weights in regions of high density and a small number of components with small weights in regions of low density.
%
% What does reduction mean for D.m.?
%
Hence, approximating one \DMD{} by another one while maintaining the information content is equivalent to maintaining its probability mass distribution. 

%
% Fundamental problem: Reduce point set
%
Reducing the size of a point set while maintaining its information content as much as possible is a fundamental problem occurring in many flavors in different contexts.
%
% Example applications
%
% First class of applications: Many samples given
%
One example is a large number of noisy samples from an underlying unknown probability density function. The goal is to represent the underlying density by a reduced set of well-placed Dirac components while removing the noise.
%
% Second class: Prediction step of particle filter
%
The need for reduction also occurs, when the given point set is too complex for further processing. During the recursive processing of Dirac mixtures, for example in a nonlinear filter, the number of components often explodes. A typical example is the propagation of Dirac mixture densities through a discrete-time dynamic system, which requires a kind of generalized convolution with the noise density. After one processing step, the number of components is given by the product of the given number of components times the number of components used for representing the noise density, which results in an exponential increase with the number of processing steps. Hence, the goal is to keep the number of components at a manageable level by performing regular reductions.

%%%%%%%%%%%%%%%%%%%%%%%%%%%%%%%%%%%%%%%%%%%%%%%%%%%%%%%%%%%%%%%%%%%%%%%%%%%%%%%%%%%%%%%%%%%%%%%%%%%%%%%%%%%%%%%%%%%%%%%%%%%%%%%%%%%%%%%%%%%%%
%
% Prior Work
%
%%%%%%%%%%%%%%%%%%%%%%%%%%%%%%%%%%%%%%%%%%%%%%%%%%%%%%%%%%%%%%%%%%%%%%%%%%%%%%%%%%%%%%%%%%%%%%%%%%%%%%%%%%%%%%%%%%%%%%%%%%%%%%%%%%%%%%%%%%%%%

\subsection{Related Work} \label{SubSec_Prior_Work}

	We will now take a look at different methods that have been devised in the general context of reduction of point sets or densities.

%%%%%%%%%%%%%%%%%%%%%%%%%%%%%%%%%%%%%%%%%%%%%%%%%%%%%%%%%%%%%%%%%%%%%%%%%%%%%%%%%%%%%%%%%%%%%%%%%%%%%%%%%%%%%%%%%%%%%%%%%%%%%%%%%%%%%%%%%%%%%
%
% Most common technique: Random selection
%
%%%%%%%%%%%%%%%%%%%%%%%%%%%%%%%%%%%%%%%%%%%%%%%%%%%%%%%%%%%%%%%%%%%%%%%%%%%%%%%%%%%%%%%%%%%%%%%%%%%%%%%%%%%%%%%%%%%%%%%%%%%%%%%%%%%%%%%%%%%%%

\paragraph{Random Selection}

The most common technique for reducing the number of components of a given  \DMD{} is the random selection of certain components.
%
% Example: Particle filter prediction step
%
It is commonly used in the prediction step of Particle Filters, where each prior sample is perturbed with a single sample from the noise distribution before propagation thorough the system model. The perturbation can be viewed as generating the noise samples at once with a subsequent random selection from the Cartesian product of prior samples and noise samples.

%%%%%%%%%%%%%%%%%%%%%%%%%%%%%%%%%%%%%%%%%%%%%%%%%%%%%%%%%%%%%%%%%%%%%%%%%%%%%%%%%%%%%%%%%%%%%%%%%%%%%%%%%%%%%%%%%%%%%%%%%%%%%%%%%%%%%%%%%%%%%
%
% Intermediate approximation with continuous density
%
%%%%%%%%%%%%%%%%%%%%%%%%%%%%%%%%%%%%%%%%%%%%%%%%%%%%%%%%%%%%%%%%%%%%%%%%%%%%%%%%%%%%%%%%%%%%%%%%%%%%%%%%%%%%%%%%%%%%%%%%%%%%%%%%%%%%%%%%%%%%%

\paragraph{Intermediate Continuous Densities}

Another common technique is to replace the given \DM{} by a suitable continuous density in a first step. In a second step, the desired number of samples is drawn from the continuous density. With this technique, it is also possible to increase the number of components as required. However, the first step is equivalent to density estimation from samples, which is by itself a complicated task and an active research topic. Furthermore, this reduction technique introduces undesired side information via the choice of the continuous smoothing density.

%%%%%%%%%%%%%%%%%%%%%%%%%%%%%%%%%%%%%%%%%%%%%%%%%%%%%%%%%%%%%%%%%%%%%%%%%%%%%%%%%%%%%%%%%%%%%%%%%%%%%%%%%%%%%%%%%%%%%%%%%%%%%%%%%%%%%%%%%%%%%
%
% Clustering or vector quantization
%
%%%%%%%%%%%%%%%%%%%%%%%%%%%%%%%%%%%%%%%%%%%%%%%%%%%%%%%%%%%%%%%%%%%%%%%%%%%%%%%%%%%%%%%%%%%%%%%%%%%%%%%%%%%%%%%%%%%%%%%%%%%%%%%%%%%%%%%%%%%%%

\paragraph{Clustering or Vector Quantization Methods}

Clustering or vector quantization methods also aim at representing a point set by a smaller set of representatives. 
%
% Distortion measure
%
For optimization purposes, a distortion measure is typically used, which sums up the (generalized) distances between the points and their representatives.
%
% Two conditions: Nearest neighbor & centroid
%
Minimizing the distortion measure results in two conditions: 1.~Points are associated to their closest representative. 2.~The representative is calculated as the average of all its associated points. 
%
% Iterative solution: Lloyd or Linde-Buzo-Gray
%
As no closed-form solution for performing the minimization of the distortion measure exist, robust iterative procedures have been devised starting with Lloyd's algorithm proposed in 1957 and published later in \cite{lloyd_least_1982}, first called k-means algorithm in \cite{macqueen_methods_1967}, and its extension in the form of the  Linde-Buzo-Gray-algorithm \cite{linde_algorithm_1980}.
%
% Form of density not maintained by VQ
%
Obviously, the representatives fulfilling the above two conditions do not necessarily maintain the form of the density, which will also be shown by some examples in \Sec{Sec_Numerical} of this paper.
%
% Additional problem: Convergence to local minima
%
An additional problem of clustering or vector quantization methods is that the iterative minimization procedures typically get stuck in local minima. Intuitively, the resulting samples are only influenced by samples in the corresponding part of the Voronoi diagram, while the proposed method is based upon a global distance measure.

%%%%%%%%%%%%%%%%%%%%%%%%%%%%%%%%%%%%%%%%%%%%%%%%%%%%%%%%%%%%%%%%%%%%%%%%%%%%%%%%%%%%%%%%%%%%%%%%%%%%%%%%%%%%%%%%%%%%%%%%%%%%%%%%%%%%%%%%%%%%%
%
% Reduction of continuous mixtures -> continuous densities
%
%%%%%%%%%%%%%%%%%%%%%%%%%%%%%%%%%%%%%%%%%%%%%%%%%%%%%%%%%%%%%%%%%%%%%%%%%%%%%%%%%%%%%%%%%%%%%%%%%%%%%%%%%%%%%%%%%%%%%%%%%%%%%%%%%%%%%%%%%%%%%

\paragraph{Reapproximating Continuous Mixtures with Continuous Mixtures}

%
% Gaussian mixture reduction
%

As \DM{} reduction is a special case of general mixture reduction techniques. As these techniques are usually focused on continuous densities such as \GM s, 
e.g., see \cite{crouse_look_2011}, it is worthwhile to discuss the differences. First, when continuous mixtures are re-approximated with continuous mixtures, the densities or parts of the densities can be directly compared in terms of the integral squared difference \cite{williams_cost-function-based_2003} or the Kullback-Leibler divergence \cite{runnalls_kullback-leibler_2007}. Directly comparing densities with an integral measure is not possible when at least one of the densities is a \DMD{} \cite{MFI08_Hanebeck-LCD}. Instead, cumulative distributions can be used in the scalar case or appropriate generalizations for the multivariate case \cite{MFI08_Hanebeck-LCD}. Second, for continuous mixtures two or more critical components can be merged in order to locally reduce the number of components \cite{west_approximating_1993}, where different criteria for identifying components are possible such as small weights. These components are then replaced by a new component with appropriate parameters, e.g., maintaining mean and covariance. Locally replacing components is not straightforward for \DMD s as it is i)~difficult to identify potential merging candidates and ii)~a single replacement component does not capture the extent covered by the original components. Hence, a replacement of several \DC s by a smaller set of \DC s with a cardinality larger than one would be in order. 

%%%%%%%%%%%%%%%%%%%%%%%%%%%%%%%%%%%%%%%%%%%%%%%%%%%%%%%%%%%%%%%%%%%%%%%%%%%%%%%%%%%%%%%%%%%%%%%%%%%%%%%%%%%%%%%%%%%%%%%%%%%%%%%%%%%%%%%%%%%%%
%
% Reduction of continuous mixtures -> discrete densities
%
%%%%%%%%%%%%%%%%%%%%%%%%%%%%%%%%%%%%%%%%%%%%%%%%%%%%%%%%%%%%%%%%%%%%%%%%%%%%%%%%%%%%%%%%%%%%%%%%%%%%%%%%%%%%%%%%%%%%%%%%%%%%%%%%%%%%%%%%%%%%%

\paragraph{Reapproximating Continuous Mixtures with Discrete Mixtures}

%
% Density approximation problem: Moment-based
%

The reduction problem can be viewed as approximating a given (potentially continuous) density with a \DMD{}. Several options are available for performing this approximation. Moment-based approximations have been proposed in the context of Gaussian densities  and Linear Regression Kalman Filters (LRKFs), see \cite{lefebvre_linear_2005}. Examples are the Unscented Kalman Filter (UKF) in \cite{julier_new_2000} and its scaled version in \cite{julier_scaled_2002}, its higher-order generalization in \cite{tenne_higher_2003}, and a generalization to an arbitrary number of deterministic samples placed along the coordinate axes introduced in \cite{IFAC08_Huber}. 
%
% First results on DMA for cpdfs
%
For \cpdf s, a first approach to \DMA{} in the vein of the UKF is introduced in \cite{ACC13_Kurz} for the \vMd{} and the \WNd{}. Three components are systematically placed based on matching the first \cM{}. This \DMA{} of continuous \cpdf s has already been applied to sensor scheduling based on bearings-only measurements \cite{Fusion13_Gilitschenski}. In \cite{IPIN13_Kurz}, the results are used to perform recursive circular filtering for tracking an object constrained to an arbitrary one-dimensional manifold.
%
% DMA via entropy
%
For the case that only a finite set of moments of a random vector is given and the underlying density is unknown, an algorithm is proposed in \cite{arXiv14_Hanebeck} for calculating multivariate \DMDs{} with an arbitrary number of arbitrarily placed components maintaining these moments while providing a homogeneous coverage of the state space. This method could also be used for the reduction problem by calculating the moments of the given point set.
%
% Density approximation problem: Moment-based
%
Methods that are based on distance measures between the given density and its \DMA{} have been proposed for the case of scalar continuous densities in \cite{CDC06_Schrempf-DiracMixt, MFI06_Schrempf-CramerMises}. They are based on distance measures between cumulative distribution functions. These distance-based approximation methods are generalized to the multi-dimensional case by defining an alternative to the classical cumulative distribution, the \LCD{} (LCD) \cite{MFI08_Hanebeck-LCD}, which is unique and symmetric. Based on the LCD, multi-dimensional Gaussian densities are approximated by \DMDs{} in \cite{CDC09_HanebeckHuber}. A more efficient method for the case of \SND s with a subsequent transformation is given in \cite{ACC13_Gilitschenski}.

%
% LCD-based method used for reduction of DM
%

The LCD-based methods will be extended to the reduction of \DMDs{} in this paper. A variant of the reduction problem, the optimal approximation of the Cartesian product of marginal \DMDs{} is considered in \cite{ACC10_Eberhardt} and a solution is proposed that does not require the explicit calculation of all combinations of marginal components.

%%%%%%%%%%%%%%%%%%%%%%%%%%%%%%%%%%%%%%%%%%%%%%%%%%%%%%%%%%%%%%%%%%%%%%%%%%%%%%%%%%%%%%%%%%%%%%%%%%%%%%%%%%%%%%%%%%%%%%%%%%%%%%%%%%%%%%%%%%%%%
%
% Key Idea
%
%%%%%%%%%%%%%%%%%%%%%%%%%%%%%%%%%%%%%%%%%%%%%%%%%%%%%%%%%%%%%%%%%%%%%%%%%%%%%%%%%%%%%%%%%%%%%%%%%%%%%%%%%%%%%%%%%%%%%%%%%%%%%%%%%%%%%%%%%%%%%

\subsection{Key Ideas and Results of the Paper} \label{SubSec_KeyIdea}

	%
% Key idea: compare masses
%
The key idea of this paper is the systematic reapproximation of \DMDs{} by minimization of a novel distance measure. The distance measure compares the probability masses of both densities under certain kernels for all possible kernel locations and widths, which allows the use of integral measures for the mass functions.
%
% Similar to CDC09
%
This approximation method is similar to the approximation of multivariate \GDs{} by \DM s in \cite{CDC09_HanebeckHuber}.
%
% Here: closed-form solution
%
However, calculating the distance measure between multivariate Gaussians and \DMDs{} in \cite{CDC09_HanebeckHuber} requires a one-dimensional numerical approximation, while the distance measure for comparing \DMDs{} with \DMDs{} proposed in this paper is given in closed from.

%
% Advantages
%

%
% Smooth, no local minima
%
The resulting distance measure is smooth and does not suffer from local minima, so that standard optimization methods can be used for calculating the desired \DMA{}. 
%
% Deterministic
%
As no randomness is involved, the optimization results are completely deterministic and reproducible, which is in contrast to random selection procedures and most clustering methods.

%
% Preview example
%

The results for approximating $2000$ samples from a \SND{} by a \DMA{} with $L=10$, $L=20$, and $L=30$ components are shown in \Fig{Fig_Gaussian_DMA}.

\begin{figure*}[t]
	\includegraphics{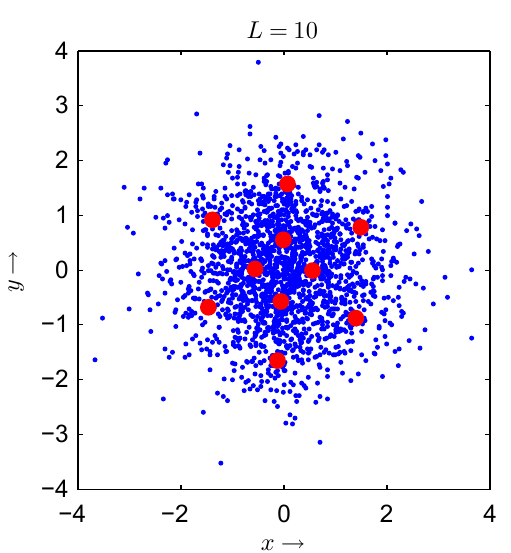}
	\hspace*{\fill}
	\includegraphics{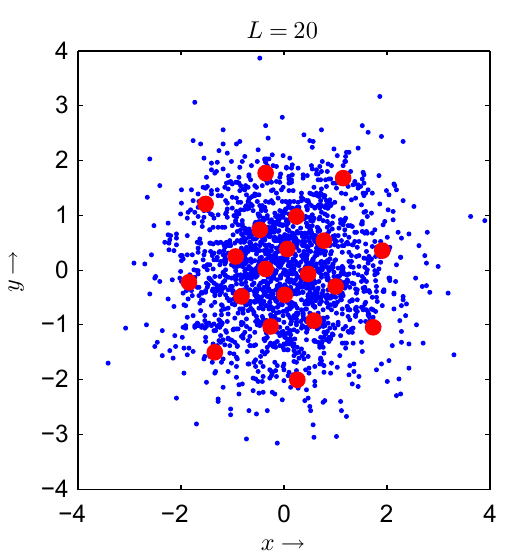}
	\hspace*{\fill}
	\includegraphics{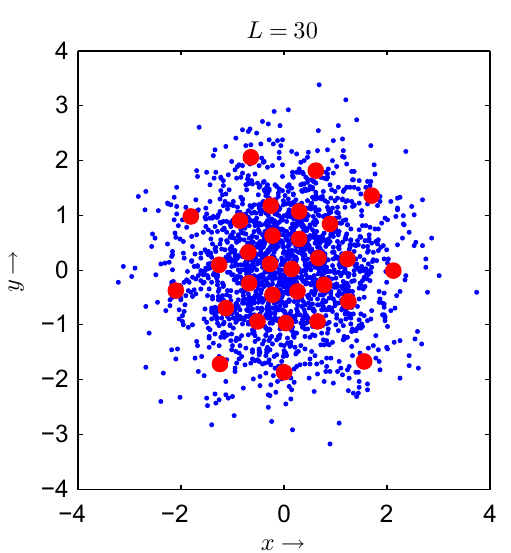}
	\caption{\DMA{} of $2000$ samples from a \SND{} with $L=10$, $L=20$, and $L=30$.  Blue: Random samples representing the \SND{}. Red: Reduced point set.}
	\label{Fig_Gaussian_DMA}
\end{figure*}

%%%%%%%%%%%%%%%%%%%%%%%%%%%%%%%%%%%%%%%%%%%%%%%%%%%%%%%%%%%%%%%%%%%%%%%%%%%%%%%%%%%%%%%%%%%%%%%%%%%%%%%%%%%%%%%%%%%%%%%%%%%%%%%%%%%%%%%%%%%%%
%
% Organization of Paper
%
%%%%%%%%%%%%%%%%%%%%%%%%%%%%%%%%%%%%%%%%%%%%%%%%%%%%%%%%%%%%%%%%%%%%%%%%%%%%%%%%%%%%%%%%%%%%%%%%%%%%%%%%%%%%%%%%%%%%%%%%%%%%%%%%%%%%%%%%%%%%%

\subsection{Organization of the Paper} \label{SubSec_Organization}

	%
% Problem formumation
%
In the next section, a rigorous formulation of the considered approximation problem is given. 
%
% LCD
%
For comparing \DMDs{}, an alternative to the classical cumulative distribution, the so called Localized Cumulative Distribution (LCD) is introduced in \Sec{Sec_LCD}. 
%
% Analysis: CvMD
%
Based on this LCD, a generalization of the \CvMD{}, which is the squared integral difference between the LCD of the given density and the LCD of the approximate density is given in \Sec{Sec_CvD}. This new distance measure is used for analysis purposes, i.e., for comparing the approximate Dirac mixture to the given one. 
%
% Synthesis: Gradient
%
The synthesis problem, i.e., determining the parameters of the approximate Dirac mixture in such a way that it is as close as possible to the given Dirac mixture according to the new distance measure is the topic of \Sec{Sec_Reduction}. Minimization is performed with a quasi-Newton method. The required gradient is derived in \Appendix{Sec_Gradient}, which also gives necessary conditions for a minimum distance in the form of a set of nonlinear equations.
%
% Complexity
%
The complexity of both calculating the distance measure and its gradient in practical situations is derived in \Sec{Sec_Complexity}.
%
% Numerical evaluation
%
Examples of using the new reduction method on specific point sets are given in \Sec{Sec_Numerical}.
%
% Conclusions
%
The new approach is discussed in \Sec{Sec_Conclusions} and an outlook to future work is given.

%%%%%%%%%%%%%%%%%%%%%%%%%%%%%%%%%%%%%%%%%%%%%%%%%%%%%%%%%%%%%%%%%%%%%%%%%%%%%%%%%%%%%%%%%%%%%%%%%%%%%%%%%%%%%%%%%%%%%%%%%%%%%%%%%%%%%%%%%%%%%
%
% Problem Formulation
%
%%%%%%%%%%%%%%%%%%%%%%%%%%%%%%%%%%%%%%%%%%%%%%%%%%%%%%%%%%%%%%%%%%%%%%%%%%%%%%%%%%%%%%%%%%%%%%%%%%%%%%%%%%%%%%%%%%%%%%%%%%%%%%%%%%%%%%%%%%%%%

\section{Problem Formulation} \label{Sec_Prob_Form}

	We consider an $N$--dimensional Dirac mixture density with $M$ components given by
\begin{equation*}
\tilde{f}(\vec{x}) = \sum_{i=1}^M w_i^y \, \delta(\vec{x} - \vec{y}_i) \enspace ,
\end{equation*}
with positive weights $w_i^y>0$, i.e., $w_i^y > 0$, for $i=1, \ldots, M$, that sum up to one and $M$ locations $\vec{y}_i = \begin{bmatrix} y_i^{(1)}, y_i^{(2)}, \ldots, y_i^{(N)} \end{bmatrix}^T$ for $i=1, \ldots, M$. This density is approximated by another $N$--dimensional Dirac mixture density with $L$ components given by
\begin{equation}\label{Eq_DiracMixture}
f(\vec{x}) = \sum_{i=1}^L w_i^x \, \delta(\vec{x} - \vec{x}_i) \enspace ,
\end{equation}
with positive weights $w_i^x$, i.e., $w_i^x > 0$, for $i=1, \ldots, L$, that sum up to one and $L$ locations $\vec{x}_i = \begin{bmatrix} x_i^{(1)}, x_i^{(2)}, \ldots, x_i^{(N)} \end{bmatrix}^T$ for $i=1, \ldots, L$, where we typically assume $L \le M$.

%
% Goal: Systematic approximation
%
The goal is to select the location parameters $\vec{x}_i$, $i=1, \ldots, L$ of the approximating density $f(\vec{x})$ in such a way that a distance measure $D$ between the true density $\tilde{f}(\vec{x})$ and its approximation $f(\vec{x})$ is systematically minimized. The weights  $w_i^x$, $i=1, \ldots, L$ are assumed to be given and are typically set to be equal. An extension to given unequal weights or to even optimizing the weights of $f(\vec{x})$ is a simple extension that is not pursued in this paper.

%
% Two cases for true density: 1. Equally weighted D.m.: information coded by locations only. 2. Arbirarily weighted components.
%

The true \DMD{} might already have equally weighted components, so that the information is solely stored in the component locations. In this case, the goal of the approximation is a pure reduction of the number of components. On the other hand, the components of the true \DMD{} might have different weights. This could be the result of, e.g., weighting a prior \DMD{} by a likelihood function in a Bayesian filtering setup. In that case, the approximation replaces an arbitrarily weighted \DMD{} by an equally weighted one.
%
% Why is case L==M interesting: reapproximate arbirarily weighted D.m. by equally weighted one
%
In the latter case, an equal number of components, i.e., $L=M$, can be useful.

%%%%%%%%%%%%%%%%%%%%%%%%%%%%%%%%%%%%%%%%%%%%%%%%%%%%%%%%%%%%%%%%%%%%%%%%%%%%%%%%%%%%%%%%%%%%%%%%%%%%%%%%%%%%%%%%%%%%%%%%%%%%%%%%%%%%%%%%%%%%

\section{Localized Cumulative Distribution} \label{Sec_LCD}

    %
% D.m. cannot be directy compared -> transformation required
%

For the systematic reduction of the number of components of a given \DMD{}, a distance measure for comparing the original density and its approximation is required. However, \DMDs{} cannot be directly compared as they typically do not even share a common support. 
%
% Typically: Cumulative distributions
%
Typically, their corresponding cumulative distributions are used for comparison purposes,
%
% Example: Statistical tests
%
as is the case in certain statistical tests such as the Kolmogov-Smirnov test \cite[p.~623]{press_numerical_1992}.
%
% Problem: Not unique, not symmetric
%
However, it has been shown in \cite{MFI08_Hanebeck-LCD} that although the cumulative distribution is well suited for comparing scalar densities, it exhibits several problems in higher-dimensional spaces: It is non-unique and non-symmetric.
%
% In addition: Integral measures do not converge
%
In addition, integral measures for comparing two cumulative distributions do not converge over infinite integration domains when the underlying \DMDs{} differ.

%
% Alternative: LCD
%
As an alternative transformation of densities, the Localized Cumulative Distribution (LCD) introduced in \cite{MFI08_Hanebeck-LCD} is employed here in a generalized form.
%
% Basic mechanism
%
An LCD is an integral measure proportional to the mass concentrated in a region with a size parametrized by a vector $\vec{b}$ around test points $\vec{m}$. These regions are defined by kernels $K(\vec{x}-\vec{m}, \vec{b})$ centered around $\vec{m}$ with size $\vec{b}$.
\begin{definition}
Let $\vec{\rv{x}}$ be a random vector with $\vec{\rv{x}} \in \NewR^N$, which is characterized by an $N$--dimensional probability density function $f: \NewR^N\!\!\rightarrow\!\!\NewR_{+}$. The corresponding Localized Cumulative Distribution (LCD) is defined as
\begin{equation*}
F(\vec{m}, \vec{b}) = \int _{\NewR^N} f(\vec{x}) \, K(\vec{x}-\vec{m}, \vec{b}) \, d \vec{x}
\end{equation*}
with $\vec{b} \in \NewR_{+}^N$ and $F(.,.): \Omega \rightarrow [0,1], \Omega \subset \NewR^N \times \NewR_{+}^N$.
\end{definition}

\begin{definition}

As a shorthand notation, we will denote the relation between the density $f(\vec{x})$ and its LCD $F(\vec{x}, \vec{b})$ by
\begin{equation*}
f(\vec{x}) \; \laplace \; F(\vec{m}, \vec{b}) \enspace .
\end{equation*}
\end{definition}

In this paper, we focus attention on separable kernels of the type
\begin{equation*}
K(\vec{x}-\vec{m}, \vec{b}) = \prod_{k=1}^N K(x^{(k)} - m^{(k)}, b^{(k)}) \enspace .
\end{equation*}
Furthermore, we consider kernels with equal width in every dimension, i.e., $b_i=b$ for $i=1, \ldots, N$, which gives
\begin{equation*}
K(\vec{x}-\vec{m}, b) = \prod_{i=k}^N K(x^{(k)} - m^{(k)}, b) \enspace .
\end{equation*}

%
% Rectangular kernels
%
Rectangular, axis-aligned kernels as used in \cite{MFI08_Hanebeck-LCD} are the obvious choice as they yield the probability mass of the considered density in a rectangular region centered around $\vec{m}$.
%
% Close to discrepancy measures
%
Rectangular kernels are a good choice for analysis purposes and are used, e.g., when assessing the discrepancy of a point set from a uniform distribution.

%
% Gaussian kernels
%
However, for synthesizing a suitable approximation for a given (nonuniform) \DM{} with a smaller number of components, smooth kernels lead to simpler optimization problems. Here, we consider kernels of Gaussian type according to
\begin{equation*}
K(\vec{x}-\vec{m}, b) = \prod_{k=1}^N \exp\left( -\frac{1}{2} \frac{\left(x^{(k)}-m^{(k)}\right)^2}{b^2} \right) \enspace .
\end{equation*}
Based on a Gaussian kernel, an N-dimensional Dirac component $\delta(\vec{x} - \hat{\vec{x}})$ at location $\hat{\vec{x}}$ corresponds to its LCD $\Delta(\vec{m}, b)$
\begin{equation*}
\delta(\vec{x} - \hat{\vec{x}}) \; \laplace \; \Delta(\vec{m}, b)
\end{equation*}
with
\begin{equation*}
\begin{split}
\Delta(\vec{m}, b) & = \int _{\NewR^N} \delta(\vec{x} - \hat{\vec{x}}) \, K(\vec{x}-\vec{m}, \vec{b}) \, d \vec{x}\\
& = \prod_{k=1}^N \exp\left( -\frac{1}{2} \frac{\left(\hat{x}^{(k)}-m^{(k)}\right)^2}{b^2} \right) \enspace .
\end{split}
\end{equation*}
%
% LCD of D.n.
%
With this LCD of a single Dirac component, the LCD of the \DM{} in \Eq{Eq_DiracMixture} is given by
\begin{equation*}
F(\vec{m}, b) = \sum_{i=1}^L w_i^x \, \prod_{k=1}^N \exp\left( -\frac{1}{2} \frac{\left(x_i^{(k)}-m^{(k)}\right)^2}{b^2} \right) \enspace .
\end{equation*}
A similar result holds for the original \DM{} $\tilde{f}(\vec{x})$.

%%%%%%%%%%%%%%%%%%%%%%%%%%%%%%%%%%%%%%%%%%%%%%%%%%%%%%%%%%%%%%%%%%%%%%%%%%%%%%%%%%%%%%%%%%%%%%%%%%%%%%%%%%%%%%%%%%%%%%%%%%%%%%%%%%%%%%%%%%%%
    
\section{A Modified \CvMD{}} \label{Sec_CvD}

	The \LCD{} (LCD) defined previously can now be used to derive a modified version of the \CvMD{} suitable for comparing Dirac Mixtures. This new distance is defined as the integral of the square of the difference between the LCD of the true density $\tilde{f}(\vec{x})$ and the LCD of its approximation $f(\vec{x})$.

\begin{definition}[Modified \CvMD{}]
\label{Def_ModCvMDist}%
The distance $D$ between two densities $\tilde{f}(\vec{x}): \NewR^N \rightarrow \NewR_{+}$ and $f(\vec{x}): \NewR^N \rightarrow \NewR_{+}$ is given in terms of their corresponding LCDs $\tilde{F}(\vec{x}, b)$ and $F(\vec{x}, b)$ as
\begin{equation*}
D = \int _{\NewR_{+}} w(b) \int _{\NewR^N} \left( \tilde{F}(\vec{m}, b) - F(\vec{m}, b) \right )^2 \, d \vec{m} \, d b \enspace ,
\end{equation*}
where $w(b): \NewR_{+} \rightarrow [0,1]$ is a suitable weighting function.% Typically, the weighting function $w(b)=1 \, \forall \, b$ is used.
\end{definition}
A weighting function $w(b)$ has been introduced that controls how kernels of different sizes influence the resulting distance, which provides some degrees of freedom during the design of an approximation algorithm. Alternatively, a unit weighting function could be used while modifying the kernels accordingly.
\begin{theorem} \label{Theorem_General_Distance_Measure}
By inserting the LCDs
\begin{equation*}
\tilde{F}(\vec{m}, b) = \sum_{i=1}^M w_i^{y} \prod_{k=1}^N \exp\left( -\frac{1}{2} \frac{(y_i^{(k)}-m^{(k)})^2}{b^2} \right)
\end{equation*}
and
\begin{equation*}
F(\vec{m}, b) = \sum_{i=1}^L w_i^{x} \prod_{k=1}^N \exp\left( -\frac{1}{2} \frac{(x_i^{(k)}-m^{(k)})^2}{b^2} \right) \enspace ,
\end{equation*}
and by using the weighting function
\begin{equation*}
w(b) = \begin{cases} \frac{1}{b^{N-1}} & b \in [0, b_{\text{max}}] \\ 0 & \text{elsewhere} \end{cases}
\end{equation*}
the following expressions for the distance $D$
\begin{equation*}
\begin{split}
D = 
\sum_{i=1}^M \sum_{j=1}^M w_i^{y} w_j^{y} & \, \gamma \left( \displaystyle\sum_{k=1}^N \left( y_i^{(k)} - y_j^{(k)} \right)^2 \right) \\
- 2 \sum_{i=1}^L \sum_{j=1}^M w_i^{x} w_j^{y} & \, \gamma \left( \displaystyle\sum_{k=1}^N \left( x_i^{(k)} - y_j^{(k)} \right)^2 \right) \\
+ \sum_{i=1}^L \sum_{j=1}^L w_i^{x} w_j^{x} & \, \gamma \left( \displaystyle\sum_{k=1}^N \left( x_i^{(k)} - x_j^{(k)} \right)^2 \right)
\end{split}
\end{equation*}
with
\begin{equation*}
\gamma(z) = \frac{\pi^{\frac{N}{2}}}{8} \left\{ 4 \, b_\text{max}^2 \, \exp\left( -\frac{1}{2} \frac{ z }{2 \, b_\text{max}^2} \right) + z \,%
\Ei\left( -\frac{1}{2} \frac{ z }{2 \, b_\text{max}^2} \right) \right\} \enspace ,%
\end{equation*}
are obtained, where $\Ei(z)$ denotes the exponential integral.
\end{theorem}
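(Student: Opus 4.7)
The plan is to expand the squared difference $(\tilde{F}(\vec{m},b) - F(\vec{m},b))^2$ into three blocks, $\tilde{F}^2$, $-2\tilde{F}F$, and $F^2$. Because both LCDs are weighted sums of products of univariate Gaussian kernels of common width $b$, each block becomes a double sum (over pairs of component indices from $\{1,\ldots,M\}\times\{1,\ldots,M\}$, or $\{1,\ldots,L\}\times\{1,\ldots,M\}$, or $\{1,\ldots,L\}\times\{1,\ldots,L\}$) in which every summand has the same product-of-two-Gaussian-kernels structure. Hence it suffices to evaluate the single generic integral
\begin{equation*}
I(\vec{a},\vec{a}') \;=\; \int_0^{b_\text{max}} \!\!\frac{1}{b^{N-1}} \int_{\NewR^N} \prod_{k=1}^N \exp\!\bigl(-\tfrac{1}{2}\tfrac{(a^{(k)}-m^{(k)})^2}{b^2}\bigr)\exp\!\bigl(-\tfrac{1}{2}\tfrac{(a'^{(k)}-m^{(k)})^2}{b^2}\bigr)\, d\vec{m}\, db,
\end{equation*}
and to identify $\gamma(z) = I(\vec{a},\vec{a}')$ as a function of $z = \sum_{k=1}^N (a^{(k)}-a'^{(k)})^2$. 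The three-fold sum for $D$ is then immediate once one pulls the weights and the $\pm 1,\pm 2$ signs out of each block.

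Next I would handle the $\vec{m}$-integration by completing the square in each coordinate. The product of two Gaussians centered at $a^{(k)}$ and $a'^{(k)}$ factors as a constant $\exp\!\bigl(-(a^{(k)}-a'^{(k)})^2/(4b^2)\bigr)$ times a single Gaussian in $m^{(k)}$ centered at the midpoint; integrating this single Gaussian over $\NewR$ gives $b\sqrt{\pi}$, so the full $N$-dimensional $\vec{m}$-integration yields $\pi^{N/2}\, b^{N}\exp(-z/(4b^2))$. Multiplying by $w(b) = b^{-(N-1)}$ collapses this to $\pi^{N/2}\, b\exp(-z/(4b^2))$, so only a single clean one-dimensional integral $\int_0^{b_\text{max}} b\exp(-z/(4b^2))\, db$ remains, regardless of the ambient dimension $N$.

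The last step is the $b$-integral, which I would compute by the substitution $u = z/(4b^2)$, turning it into $(z/8)\int_{u_0}^{\infty} u^{-2} e^{-u}\, du$ with $u_0 = z/(4 b_\text{max}^2)$. Integration by parts ($f = e^{-u}$, $dg = u^{-2}\, du$) yields a boundary term $e^{-u_0}/u_0$ plus $\int_{u_0}^{\infty} e^{-u}/u\, du$, which under the paper's convention equals $-\Ei(-u_0)$; combining signs one obtains $e^{-u_0}/u_0 + \Ei(-u_0)$. Restoring the prefactor and rewriting $u_0 = \tfrac{1}{2}\tfrac{z}{2 b_\text{max}^2}$ produces $\gamma(z) = (\pi^{N/2}/8)\bigl[4 b_\text{max}^2 \exp(-\tfrac{1}{2}\tfrac{z}{2 b_\text{max}^2}) + z\,\Ei(-\tfrac{1}{2}\tfrac{z}{2 b_\text{max}^2})\bigr]$, exactly the claimed closed form. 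Reassembling the three sum blocks with their $+1,-2,+1$ coefficients then delivers the stated expression for $D$.

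The only real obstacle I expect is constant-bookkeeping. The $\vec{m}$-integration is the standard Gaussian reproducing-kernel identity, and the $b$-integration reduces to the exponential integral via one substitution plus one integration by parts. What requires care is tracking the halves hidden inside each exponent $\tfrac{1}{2}(\cdot)^2/b^2$, verifying that $b^N$ from the $\vec{m}$-integration exactly cancels $b^{N-1}$ from $w(b)$ to leave a single factor of $b$ (this is precisely why $w(b)=b^{-(N-1)}$ was chosen), and checking that $z/(4 b_\text{max}^2)$ coincides with the form $\tfrac{1}{2}\tfrac{z}{2 b_\text{max}^2}$ appearing inside $\Ei$ in the theorem statement.
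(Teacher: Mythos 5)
Your proposal is correct and follows essentially the same route as the paper: expand the squared LCD difference into the three weighted double sums, reduce each summand to a generic Gaussian-product integral over $\vec{m}$ (yielding $\pi^{N/2} b^{N}\exp(-z/(4b^2))$, with $b^{N}$ cancelling against $w(b)=b^{-(N-1)}$ up to a single factor of $b$), and then evaluate the remaining one-dimensional $b$-integral in terms of $\Ei$. The only difference is cosmetic: the paper simply quotes the closed form of $\int_0^{b_\text{max}} b\,\exp(-z/(4b^2))\,db$, whereas you derive it via the substitution $u=z/(4b^2)$ and integration by parts, arriving at the same $\gamma(z)$.
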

\begin{proof}
For the given specific weighting function $w(b)$, the distance measure is given by
\begin{equation}
D = \int _0^{b_\text{max}} \frac{1}{b^{N-1}} \int _{\NewR^N} \left( \tilde{F}(\vec{m}, b) - F(\vec{m}, b) \right )^2 \, d \vec{m} \, d b \enspace .
\label{Eq_Distance_Measure_bmax}
\end{equation}
Inserting the LCDs $\tilde{F}(\vec{m}, b)$ and $F(\vec{m}, b)$ leads to
\begin{equation*}
\begin{split}
D = \int _0^{b_\text{max}} \frac{1}{b^{N-1}} \int _{\NewR^N}
\sum_{i=1}^M \sum_{j=1}^M w_i^{y} w_j^{y} 
& \prod_{k=1}^N \exp\left( -\frac{1}{2} \frac{(y_i^{(k)}-m^{(k)})^2}{b^2} \right) \\
& \prod_{k=1}^N \exp\left( -\frac{1}{2} \frac{(y_j^{(k)}-m^{(k)})^2}{b^2} \right) \\
- 2 \sum_{i=1}^L \sum_{j=1}^M w_i^{x} w_j^{y} 
& \prod_{k=1}^N \exp\left( -\frac{1}{2} \frac{(x_i^{(k)}-m^{(k)})^2}{b^2} \right) \\
& \prod_{k=1}^N \exp\left( -\frac{1}{2} \frac{(y_j^{(k)}-m^{(k)})^2}{b^2} \right) \\
+ \sum_{i=1}^L \sum_{j=1}^L w_i^{x} w_j^{x} 
& \prod_{k=1}^N \exp\left( -\frac{1}{2} \frac{(x_i^{(k)}-m^{(k)})^2}{b^2} \right) \\
& \prod_{k=1}^N \exp\left( -\frac{1}{2} \frac{(x_j^{(k)}-m^{(k)})^2}{b^2} \right)
\, d \vec{m} \, d b \enspace .
\end{split}
\end{equation*}
Exchanging integration and summation gives
\begin{equation*}
\begin{split}
D = 
\sum_{i=1}^M \sum_{j=1}^M w_i^{y} w_j^{y} & \int _0^{b_\text{max}} \frac{1}{b^{N-1}} \prod_{k=1}^N \int _{\NewR}
\exp\left( -\frac{1}{2} \frac{(y_i^{(k)}-m^{(k)})^2}{b^2} \right)
\exp\left( -\frac{1}{2} \frac{(y_j^{(k)}-m^{(k)})^2}{b^2} \right) d m^{(k)} \, d b\\
- 2 \sum_{i=1}^L \sum_{j=1}^M w_i^{x} w_j^{y} & \int _0^{b_\text{max}} \frac{1}{b^{N-1}} \prod_{k=1}^N \int _{\NewR}
\exp\left( -\frac{1}{2} \frac{(x_i^{(k)}-m^{(k)})^2}{b^2} \right)
\exp\left( -\frac{1}{2} \frac{(y_j^{(k)}-m^{(k)})^2}{b^2} \right) d m^{(k)} \, d b\\
+ \sum_{i=1}^L \sum_{j=1}^L w_i^{x} w_j^{x} & \int _0^{b_\text{max}} \frac{1}{b^{N-1}} \prod_{k=1}^N \int _{\NewR}
\exp\left( -\frac{1}{2} \frac{(x_i^{(k)}-m^{(k)})^2}{b^2} \right)
\exp\left( -\frac{1}{2} \frac{(x_j^{(k)}-m^{(k)})^2}{b^2} \right) d m^{(k)} \, d b
\end{split}
\end{equation*}
For further simplification, the following closed-form expression for the occurring integrals
\begin{equation}
\int _{\NewR} \exp\left( -\frac{1}{2} \frac{(z_i - m)^2}{b^2} \right) \exp\left( -\frac{1}{2} \frac{(z_j - m)^2}{b^2} \right) d m
= \sqrt{\pi} \, b \, \exp \left( -\frac{1}{2} \frac{(z_i - z_j)^2}{2 \, b^2} \right) \enspace ,
\label{Eq_Simplification_Product_of_Exp}
\end{equation}
is used. This gives
\begin{equation*}
\begin{split}
D = 
\sum_{i=1}^M \sum_{j=1}^M w_i^{y} w_j^{y} & \int _0^{b_\text{max}} b \, \pi^{\frac{N}{2}} \prod_{k=1}^N
\exp\left( -\frac{1}{2} \frac{ \left( y_i^{(k)} - y_j^{(k)} \right)^2}{2 \, b^2} \right) \, d b\\
- 2 \sum_{i=1}^L \sum_{j=1}^M w_i^{x} w_j^{y} & \int _0^{b_\text{max}} b \, \pi^{\frac{N}{2}} \prod_{k=1}^N
\exp\left( -\frac{1}{2} \frac{ \left( x_i^{(k)} - y_j^{(k)} \right)^2}{2 \, b^2} \right) \, d b\\
+ \sum_{i=1}^L \sum_{j=1}^L w_i^{x} w_j^{x} & \int _0^{b_\text{max}} b \, \pi^{\frac{N}{2}} \prod_{k=1}^N
\exp\left( -\frac{1}{2} \frac{ \left( x_i^{(k)} - x_j^{(k)} \right)^2}{2 \, b^2} \right) \, d b
\enspace .
\end{split}
\end{equation*}
or
\begin{equation*}
\begin{split}
D = 
\sum_{i=1}^M \sum_{j=1}^M w_i^{y} w_j^{y} & \int _0^{b_\text{max}} b \, \pi^{\frac{N}{2}}
\exp\left( -\frac{1}{2} \frac{ \displaystyle\sum_{k=1}^N \left( y_i^{(k)} - y_j^{(k)} \right)^2}{2 \, b^2} \right) \, d b\\
- 2 \sum_{i=1}^L \sum_{j=1}^M w_i^{x} w_j^{y} & \int _0^{b_\text{max}} b \, \pi^{\frac{N}{2}}
\exp\left( -\frac{1}{2} \frac{ \displaystyle\sum_{k=1}^N \left( x_i^{(k)} - y_j^{(k)} \right)^2}{2 \, b^2} \right) \, d b\\
+ \sum_{i=1}^L \sum_{j=1}^L w_i^{x} w_j^{x} & \int _0^{b_\text{max}} b \, \pi^{\frac{N}{2}}
\exp\left( -\frac{1}{2} \frac{ \displaystyle\sum_{k=1}^N \left( x_i^{(k)} - x_j^{(k)} \right)^2}{2 \, b^2} \right) \, d b
\enspace .
\end{split}
\end{equation*}
With
\begin{equation*}
\int _0^{b_\text{max}} b \, \exp\left( -\frac{1}{2} \frac{ z }{2 \, b^2} \right) \, d b
= \frac{1}{8} \left\{ 4 \, b_\text{max}^2 \, \exp\left( -\frac{1}{2} \frac{ z }{2 \, b_\text{max}^2} \right) + z \,
\Ei\left( -\frac{1}{2} \frac{ z }{2 \, b_\text{max}^2} \right\}
 \right)
\end{equation*}
for $z>0$, the final result is obtained.
\end{proof}
\begin{remark}
The exponential integral $\Ei(z)$ is defined as
\begin{equation*}
\Ei(z) = \int_{-\infty}^z \frac{e^t}{t} d t
\enspace .
\end{equation*}
For $z>0$, $\Ei(z)$ is related to the incomplete gamma function $\Gamma(0, z)$ according to
\begin{equation*}
\Ei(-z) = - \Gamma(0, z)
\enspace .
\end{equation*}
\end{remark}
\begin{theorem} \label{Theorem_General_Distance_Measure_2}
For large $b_\text{max}$, the distance $D$ is described by
\begin{equation}
D = \frac{\pi^{\frac{N}{2}}}{8} \left( D_{\vec{y}} - 2 D_{\vec{x}\vec{y}} + D_{\vec{x}} \right) + \frac{\pi^{\frac{N}{2}}}{4} \, C_b \, D_{E}
\enspace ,
\label{Equation_Approximate_Distance}
\end{equation}
with the constant $C_b = \log(4 \, b_\text{max}^2) - \Gamma$. Here only the last term depends upon $b_{\text{max}}$ and
\begin{equation*}
D_{\vec{y}} = \sum_{i=1}^M \sum_{j=1}^M w_i^{y} w_j^{y} \, \xlog \left( \sum_{k=1}^N \left( y_i^{(k)} - y_j^{(k)} \right)^2  \right) 
\enspace ,
\end{equation*}
\begin{equation*}
D_{\vec{x}\vec{y}} = \sum_{i=1}^L \sum_{j=1}^M w_i^{x} w_j^{y} \, \xlog \left( \sum_{k=1}^N \left( x_i^{(k)} - y_j^{(k)} \right)^2 \right) 
\enspace ,
\end{equation*}
\begin{equation*}
D_{\vec{x}} = \sum_{i=1}^L \sum_{j=1}^L w_i^{x} w_j^{x} \, \xlog \left( \sum_{k=1}^N \left( x_i^{(k)} - x_j^{(k)} \right)^2 \right) 
\enspace ,
\end{equation*}
with $\xlog(z) = z \cdot \log(z)$ and
\begin{equation*}
D_{E} = \sum_{k=1}^N \left( \sum_{i=1}^L w_i^{x} x_i^{(k)} - \sum_{i=1}^M w_i^{y} y_i^{(k)} \right)^2
\enspace .
\end{equation*}
\end{theorem}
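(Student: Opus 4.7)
The plan is to obtain the asymptotic form of $\gamma(z)$ for large $b_{\text{max}}$ and then substitute into the exact expression from \Theorem{Theorem_General_Distance_Measure}. The key observation is that once $\gamma$ is expanded, the constant, the $\xlog$ and the linear-in-$z$ parts of the three sums recombine in three different ways: a complete cancellation, an immediate identification with $D_{\vec{y}},D_{\vec{x}\vec{y}},D_{\vec{x}}$, and an algebraic collapse to $D_{E}$.

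First I would Taylor-expand both ingredients of $\gamma$ in the small parameter $u := z/(4 b_{\text{max}}^2)$. Using $\exp(-u)=1-u+O(u^{2})$ and the series representation $\Ei(-u)=\Gamma+\log u+\sum_{k\ge 1}(-u)^k/(k\,k!)=\Gamma+\log u+O(u)$, one obtains, after regrouping and keeping only the terms that do not vanish as $b_{\text{max}}\to\infty$,
\begin{equation*}
\gamma(z)=\frac{\pi^{N/2}}{8}\Bigl\{4\,b_{\text{max}}^{2}+\xlog(z)-C_{b}\,z\Bigr\}+o(1),
\end{equation*}
with $C_{b}=\log(4 b_{\text{max}}^{2})-\Gamma$, since the leftover $-z$ coming from the $\exp$-expansion is of lower order than $-C_{b}z$ and the $z^2/b_{\text{max}}^{2}$ remainders vanish.

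Next I would insert this expansion into the triple-summation expression of \Theorem{Theorem_General_Distance_Measure}, taking $z$ to be the relevant squared distance $\sum_{k}(\cdot-\cdot)^{2}$ in each of the three sums. The constant $4 b_{\text{max}}^{2}$ factors out of each inner double sum and is multiplied by $\sum_{i,j}w_{i}^{y}w_{j}^{y}=\sum_{i,j}w_{i}^{x}w_{j}^{y}=\sum_{i,j}w_{i}^{x}w_{j}^{x}=1$, so the coefficient pattern $(+1,-2,+1)$ annihilates it. The $\xlog(z)$ parts reproduce $D_{\vec{y}}$, $D_{\vec{x}\vec{y}}$ and $D_{\vec{x}}$ verbatim, supplying the $\pi^{N/2}/8$ prefactor in \Eq{Equation_Approximate_Distance}.

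The main obstacle is the linear-in-$z$ part, where one has to show that
\begin{equation*}
S:=\sum_{i,j}w_{i}^{y}w_{j}^{y}z_{ij}^{yy}-2\sum_{i,j}w_{i}^{x}w_{j}^{y}z_{ij}^{xy}+\sum_{i,j}w_{i}^{x}w_{j}^{x}z_{ij}^{xx}=-2\,D_{E},
\end{equation*}
with $z_{ij}^{yy}=\sum_{k}(y_{i}^{(k)}-y_{j}^{(k)})^{2}$ and analogous abbreviations. I would treat each coordinate $k$ separately, exploiting that the three inner sums equal $2(s_{y}-\mu_{y}^{2})$, $s_{x}+s_{y}-2\mu_{x}\mu_{y}$ and $2(s_{x}-\mu_{x}^{2})$, where $\mu_{x}^{(k)}=\sum_{i}w_{i}^{x}x_{i}^{(k)}$, $s_{x}^{(k)}=\sum_{i}w_{i}^{x}(x_{i}^{(k)})^{2}$ and the $y$-analogues. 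With the coefficient pattern $(+1,-2,+1)$ the variance-like terms $s_{x},s_{y}$ cancel and one is left with $-2(\mu_{x}^{(k)}-\mu_{y}^{(k)})^{2}$; summation over $k$ yields exactly $-2 D_{E}$. Combining with the factor $-C_{b}$ from $\gamma$ and the prefactor $\pi^{N/2}/8$ produces $(\pi^{N/2}/4)\,C_{b}\,D_{E}$, which is the only $b_{\text{max}}$-dependent contribution that survives and gives the claimed \Eq{Equation_Approximate_Distance}.
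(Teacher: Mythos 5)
Your proposal is correct and follows essentially the same route as the paper's proof: expand $\gamma(z)$ via the small-argument asymptotics of $\Ei$, cancel the $4\,b_{\text{max}}^2$ constant using the $(+1,-2,+1)$ weight pattern, read off the $\xlog$ sums, and collapse the linear-in-$z$ part to $-2\,D_{E}$. The only (cosmetic) difference is that you organize that last collapse through per-coordinate means and second moments rather than expanding and canceling the cross terms explicitly as the paper does.
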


\begin{proof}
For small $z>0$, the exponential integral can be approximated by
\begin{equation}
\Ei(-z) \approx \Gamma + \log(z) - z \enspace ,
\label{Eq_Ei_Approx}
\end{equation}
where $\Gamma \approx 0.5772$ is the Euler gamma constant. As a result, the function $\gamma(z)$ can be approximated according to
\begin{equation*}
\begin{split}
\gamma(z) & \approx \frac{\pi^{\frac{N}{2}}}{8} \left\{ 4 \, b_\text{max}^2 \, \exp\left( -\frac{1}{2} \frac{ z^2 }{2 \, b_\text{max}^2} \right) \right. \\ 
& \left. + z \, \left( \Gamma + \log\left( \frac{1}{2} \frac{ z }{2 \, b_\text{max}^2} \right) - \frac{1}{2} \frac{ z }{2 \, b_\text{max}^2} \right) \right\} \\
& \approx \frac{\pi^{\frac{N}{2}}}{8} \left\{ 4 \, b_\text{max}^2 + z \, \left( \Gamma - \log(4 \, b_\text{max}^2) + \log(z) \right) \right\} \\
& = \frac{\pi^{\frac{N}{2}}}{8} \left\{ 4 \, b_\text{max}^2 - C_b \, z + \xlog(z) \right\}
\enspace .
\end{split}
\end{equation*}
Inserting the first term into the distance measure $D$ in Theorem \ref{Theorem_General_Distance_Measure} cancels due to the fact that
\begin{equation*}
\begin{gathered}
\frac{\pi^{\frac{N}{2}}}{2} b_\text{max}^2 \left\{ \sum_{i=1}^M \sum_{j=1}^M w_i^{y} w_j^{y}
- 2 \sum_{i=1}^L \sum_{j=1}^M w_i^{x} w_j^{y}
+ \sum_{i=1}^L \sum_{j=1}^L w_i^{x} w_j^{x} \right\} \\
= \frac{\pi^{\frac{N}{2}}}{2} b_\text{max}^2 \left\{ \sum_{i=1}^M w_i^{y} - \sum_{i=1}^L w_i^{x} \right\}^2 = 0
\enspace .
\end{gathered}
\end{equation*}
Inserting the second term according to
\begin{equation*}
\begin{split}
- \frac{\pi^{\frac{N}{2}}}{8} C_b \sum_{k=1}^N \Bigg\{
& \sum_{i=1}^M \sum_{j=1}^M w_i^{y} w_j^{y} \left( y_i^{(k)} - y_j^{(k)} \right)^2 \\
- 2 & \sum_{i=1}^L \sum_{j=1}^M w_i^{x} w_j^{y} \left( x_i^{(k)} - y_j^{(k)} \right)^2\\
+ & \sum_{i=1}^L \sum_{j=1}^L w_i^{x} w_j^{x} \left( x_i^{(k)} - x_j^{(k)} \right)^2 \Bigg\}
\enspace ,
\end{split}
\end{equation*}
can be written as
\begin{equation*}
\begin{split}
- \frac{\pi^{\frac{N}{2}}}{8} C_b \sum_{k=1}^N
\Bigg\{ & \sum_{i=1}^M w_i^{y} \left( y_i^{(k)} \right)^2
- 2 \sum_{i=1}^M \sum_{j=1}^M w_i^{y} w_j^{y} y_i^{(k)} y_j^{(k)}
+ \sum_{i=1}^M w_i^{y} \left( y_i^{(k)} \right)^2 \\
-2 \Bigg[ & \sum_{i=1}^L w_i^{x} \left( x_i^{(k)} \right)^2
- 2 \sum_{i=1}^L \sum_{j=1}^M w_i^{x} w_j^{y} x_i^{(k)} y_j^{(k)}
+ \sum_{i=1}^M w_i^{y} \left( y_i^{(k)} \right)^2 \Bigg]\\
+ & \sum_{i=1}^L w_i^{x} \left( x_i^{(k)} \right)^2
- 2 \sum_{i=1}^L \sum_{j=1}^L w_i^{x} w_j^{x} x_i^{(k)} x_j^{(k)}
+ \sum_{i=1}^L w_i^{x} \left( x_i^{(k)} \right)^2 \Bigg\}
\enspace .
\end{split}
\end{equation*}
Canceling corresponding terms finally gives
\begin{equation*}
\frac{\pi^{\frac{N}{2}}}{4} C_b \sum_{k=1}^N
\left( \sum_{i=1}^M w_i^{y} y_i^{(k)} - \sum_{i=1}^L w_i^{x} x_i^{(k)} \right)^2
\enspace .
\end{equation*}
Inserting the third term gives the remaining expressions.
\end{proof}

\begin{remark}
For equal expected values of the densities $\tilde{f}(\vec{x})$ and $f(\vec{x})$, the distance measure in Theorem \ref{Theorem_General_Distance_Measure_2} does not depend upon $b_{\text{max}}$ anymore.
\end{remark}

%%%%%%%%%%%%%%%%%%%%%%%%%%%%%%%%%%%%%%%%%%%%%%%%%%%%%%%%%%%%%%%%%%%%%%%%%%%%%%%%%%%%%%%%%%%%%%%%%%%%%%%%%%%%%%%%%%%%%%%%%%%%%%%%%%%%%%%%%%%%

\section{Reduction} \label{Sec_Reduction}

	%
% The goal: Find locations of approximate D.m. to minimize D
%

The goal is to find the optimal $L$ locations $\vec{x}_i$, $i=1, \ldots, L$ of the approximating \DMD{} such that the distance measure $D$ in \Eq{Equation_Approximate_Distance} in \Theorem{Theorem_General_Distance_Measure_2} is minimized.
%
% Optimization by quasi-Newton method
%
For optimization, we use a quasi-Newton method, the Broyden-Fletcher-Goldfarb-Shanno (BFGS) algorithm.
%
% Required gradient -> appendix
%
The required gradient $G$ is given in closed form in \Appendix{Sec_Gradient}.

%
% Mean constraint
%

The final expressions for the distance measure $G$ in \Eq{Equation_Approximate_Distance} in \Theorem{Theorem_General_Distance_Measure_2} and its gradient $G$ in \Eq{Eq_Approximate_Gradient} in \Theorem{Theorem_Approximate_Gradient} do not depend on the maximum kernel width $b_\text{max}$, when the means of the original \DM{} and its reduction are equal. To enforce equal means during the optimization with an unconstrained optimization method, $b_\text{max}$ is set to a large value in these expressions. Alternatively, $b_\text{max}$ could be set to zero in the expressions \Eq{Equation_Approximate_Distance} and \Eq{Eq_Approximate_Gradient} so that they are independent of $b_\text{max}$, while the constraint of equal means is handled by a constrained optimization method.

%
% Starting values
%

Unless prior knowledge about the locations $\vec{x}_i$, $i=1, \ldots, L$ of the approximating \DMD{} is available, the locations are initialized with random samples before starting the optimization.

%%%%%%%%%%%%%%%%%%%%%%%%%%%%%%%%%%%%%%%%%%%%%%%%%%%%%%%%%%%%%%%%%%%%%%%%%%%%%%%%%%%%%%%%%%%%%%%%%%%%%%%%%%%%%%%%%%%%%%%%%%%%%%%%%%%%%%%%%%%%

\section{Complexity} \label{Sec_Complexity}

	%
% Optimization required -> consider just one evaluation
%

Finding the the minimum of the distance $D$ in \Eq{Equation_Approximate_Distance} either with or without employing the gradient $G$ or the direct solution of the system of nonlinear equations in \Eq{Corollary_System_of_Equations} requires numerical optimization routines with the time complexity depending on the specific routine employed for that purpose. For that reason, the focus will be on analyzing the complexity of performing one evaluation of the distance $D$ in \Eq{Equation_Approximate_Distance} and the corresponding gradient $G$ in \Eq{Eq_Approximate_Gradient} or the equations in \Eq{Corollary_System_of_Equations}.

%
% Complexity of distance
%

The evaluation of the distance $D$ in \Eq{Equation_Approximate_Distance} requires $\mathcal{O}\left( (M^2 + M \cdot L + L^2) \cdot N\right)$ operations, with $M$ the number of \DC s in the original \DM{}, $L$ the number of \DC s used for the approximation. $N$ is the number of dimensions.
%
% First term M^2 -> most often not necessary to calculate
%
As the first term does not depend upon the desired component locations, it can often be neglected, for example during optimization where only changes of the distance are needed. It is only required when the absolute value of the distance is of interest, e.g., when comparing different approximations.
%
% Result: distance can be calculated with M L + L^2 (plus dependence on N)
%
As a result, calculating changes of the distance with respect to changes in locations costs $\mathcal{O}\left( (M \cdot L + L^2) \cdot N\right)$ operations. 
%
% L << M -> M L
%
When the number of components $L$ of the approximation is much smaller than the number of \DMC s $M$ of the given original density, i.e., we have $L \ll M$, the complexity of calculating the third term in \Eq{Equation_Approximate_Distance} can be neglected. In that case, we obtain a complexity of $\mathcal{O}\left( M \cdot L \cdot N \right)$ operations, which is linear in $M$, $L$, and $N$.

%
% Necessary conditions -> complexity N M L (multilinear)
%

Evaluating the necessary conditions for the desired minimum in \Eq{Corollary_System_of_Equations} requires $\mathcal{O}\left( (M \cdot L + L^2) \cdot N\right)$ operations. Again assuming $L \ll M$, this results in a complexity of $\mathcal{O}\left( M \cdot L \cdot N \right)$ operations as for the distance.

%%%%%%%%%%%%%%%%%%%%%%%%%%%%%%%%%%%%%%%%%%%%%%%%%%%%%%%%%%%%%%%%%%%%%%%%%%%%%%%%%%%%%%%%%%%%%%%%%%%%%%%%%%%%%%%%%%%%%%%%%%%%%%%%%%%%%%%%%%%%

\section{Numerical Evaluation} \label{Sec_Numerical}

	%
% Evaluation of proposed reduction procedure
%

The proposed method for the optimal reduction of \DMDs{} will now be evaluated and compared to a standard clustering technique, the k-means algorithm \cite{macqueen_methods_1967}.

%
% First result in introduction
%

The results of approximating random samples from a \SND{} have already been shown in \Fig{Fig_Gaussian_DMA} in the introduction. 
%
% Now deterministic samples and outlier
%
We now consider the reduction of deterministic samples from a \SND{} corrupted by a single outlier.

%
% Now GM
%
In the next step, we approximate samples from a \GMD{}. For that purpose, we generated samples from a \GMD{} with four components, see \Fig{Fig_GM_DMA}. It is important to note that we have a total of $M=4000$ samples, but the number of samples differs for each component: We have $500$ samples for components $(1,1)$ and $(2,2)$ and $1500$ samples for components $(1,2)$ and $(2,1)$. After the reduction from $M=4000$ samples down to $L=40$ samples, we would expect that the probability masses for each component of the \GMD{} are maintained. This is exactly the case for the proposed LCD reduction as can be seen in \Fig{Fig_GM_DMA} on the left side, where we end up with $5$ samples for components $(1,1)$ and $(2,2)$ and $15$ samples for components $(1,2)$ and $(2,1)$. For k-means clustering, shown on the right side in \Fig{Fig_GM_DMA}, this is not the case, so the original distribution is not maintained. In addition, the results of k-means clustering are not reproducible and change with every run.

\begin{figure*}[t]
	\hspace*{\fill}
	\includegraphics{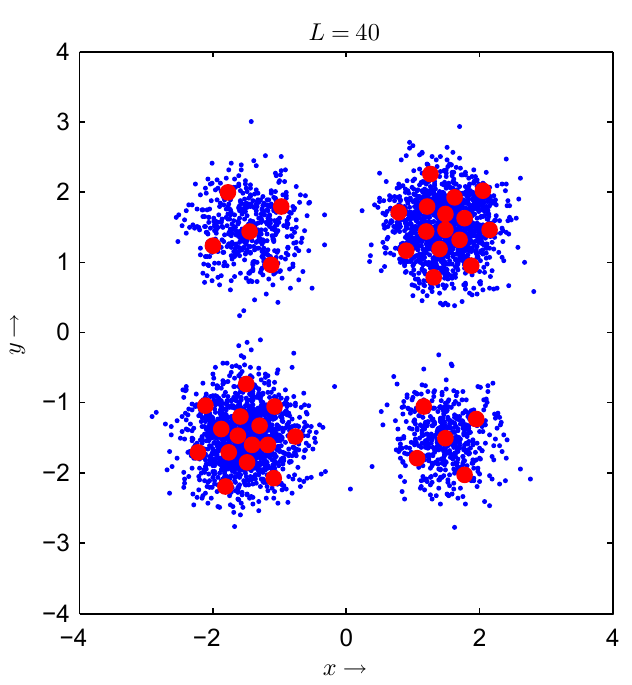}
	\hspace*{\fill}
	\includegraphics{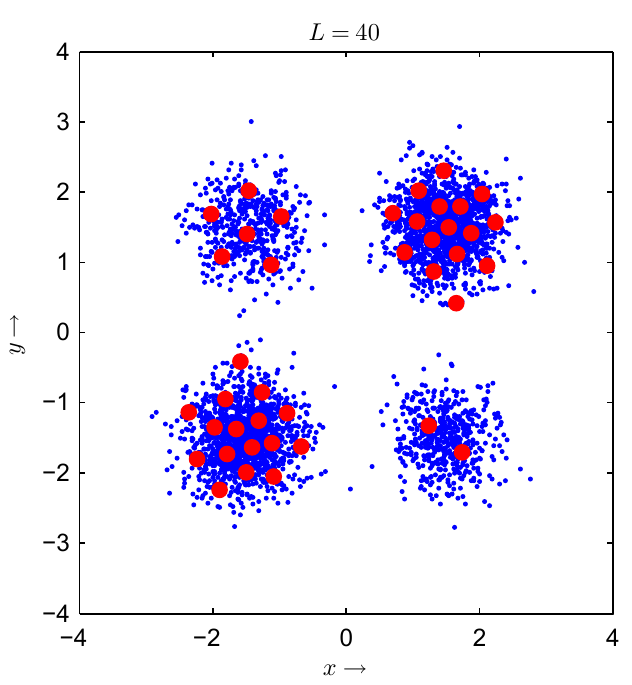}
	\hspace*{\fill}
	\caption{Reduction of a \GMD{} with four components and a varying number of samples per component from $4000$ points to $40$ points. Blue: Random samples representing the \GMD{}. Red: Reduced point set. (left) LCD reduction. (right) Result of k-means clustering.}
	\label{Fig_GM_DMA}
\end{figure*}

%
% Histograms of marginals
%

Another way to demonstrate that the proposed reduction method maintains the probability mass distribution is to compare histograms of the samples before and after reduction. To simplify visualization, histograms are calculated for the marginals in $x$-direction. \Fig{Fig_Histograms} shows the histogram of the originals samples on the left side. The histogram after reduction with the proposed LCD reduction method is shown in the middle, while the histogram of the results obtained with k-means are shown on the right side. It is obvious that the histogram of the LCD reduction is much closer to the original histogram than the histogram of k-means.

\begin{figure*}[t]
	\includegraphics{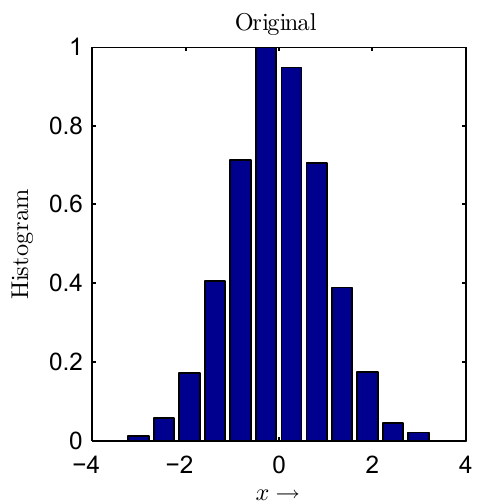}
	\hspace*{\fill}
	\includegraphics{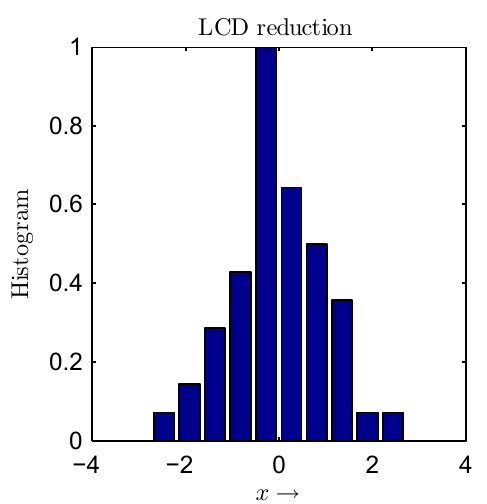}
	\hspace*{\fill}
	\includegraphics{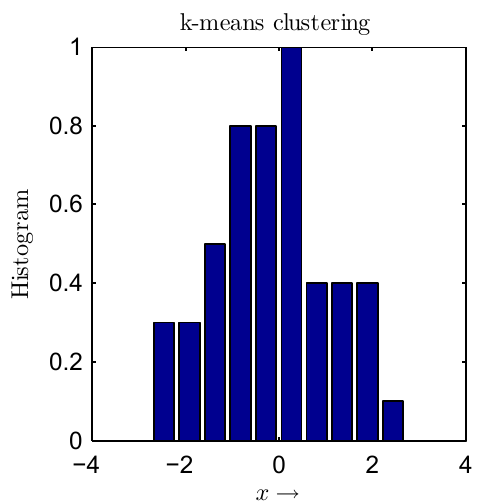}
	\caption{Normalized histograms of projections onto $x$-axis for reducing $M=5000$ samples of a two-dimensional standard normal distribution to $L=50$ samples. (left) Marginal of original samples. (middle) Marginal of LCD reduction result. (right) Marginal of result of k-means clustering.}
	\label{Fig_Histograms}
\end{figure*}

%
% Outliers
%

We now consider $M=100$ deterministic samples of a \SND{} shown in \Fig{Fig_Outlier}. The samples are calculated with the method from \cite{CDC09_HanebeckHuber}. One sample is replaced with an outlier located at $[3.5, 3.5]^T$. The point set is reduced to $L=10$ samples. The left side shows the result of the LCD reduction. The samples are well placed and only slightly shifted due to the outlier. On the right side, k-means clustering produces a result heavily disturbed by the outlier. In fact, one sample of the reduced point set is placed directly on the outlier, which significantly changes the mass distribution. Instead of representing $1$ \% of the distribution as before the reduction, the outlier now allocates $10$ \%.

\begin{figure*}[t]
	\hspace*{\fill}
	\includegraphics{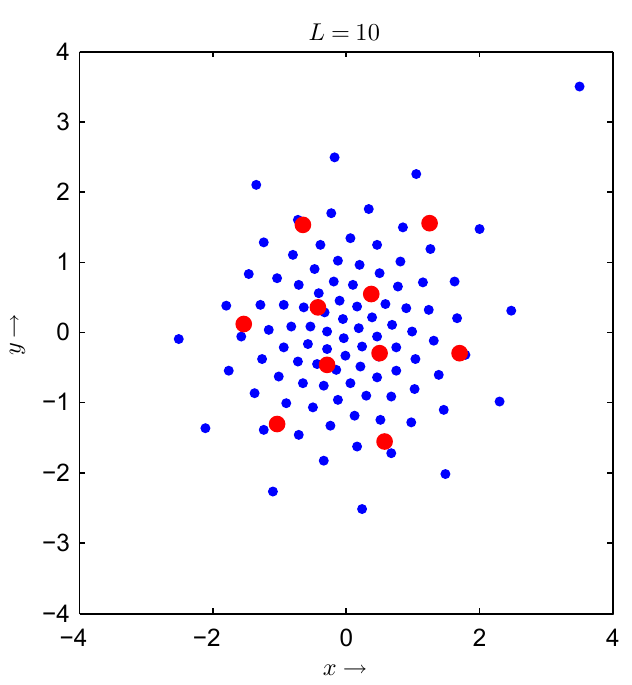}
	\hspace*{\fill}
	\includegraphics{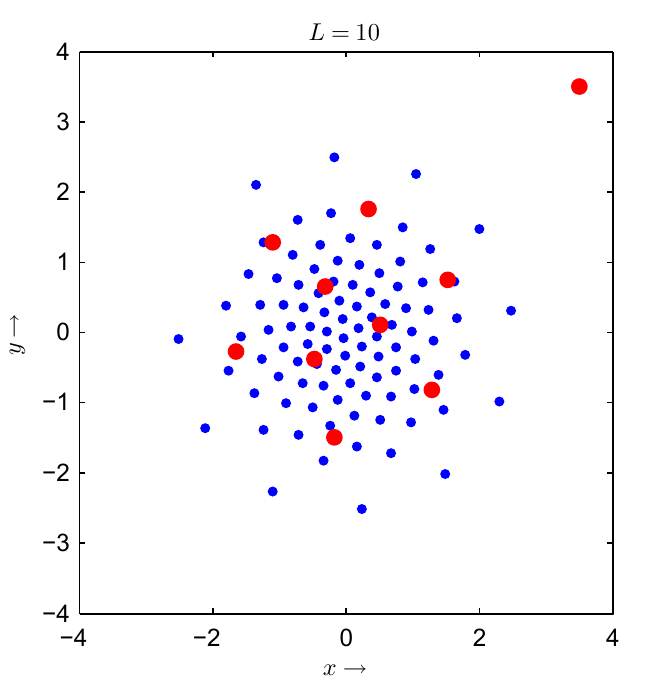}
	\hspace*{\fill}
	\caption{Blue: Deterministic samples representing a \SND{}. One sample is replaced by an outlier at $[3.5, 3.5]^T$. Red: Reduced point set. (left) LCD reduction. (right) k-means clustering.}
	\label{Fig_Outlier}
\end{figure*}

%
% Zebra
%

Finally, we investigate the robustness of the reduction methods with respect to missing data. For that purpose, we generate $2500$ samples and remove samples located within three vertical strips, see \Fig{Fig_Zebra}. The remaining samples are reduced down to $L=25$ samples. \Fig{Fig_Zebra} left shows the result of the LCD reduction, which almost gives the same results as before. The right side shows the result of k-means clustering, where it is obvious that samples are more or less placed along lines and the original mass distribution is not well maintained.

\begin{figure*}[t]
	\hspace*{\fill}
	\includegraphics{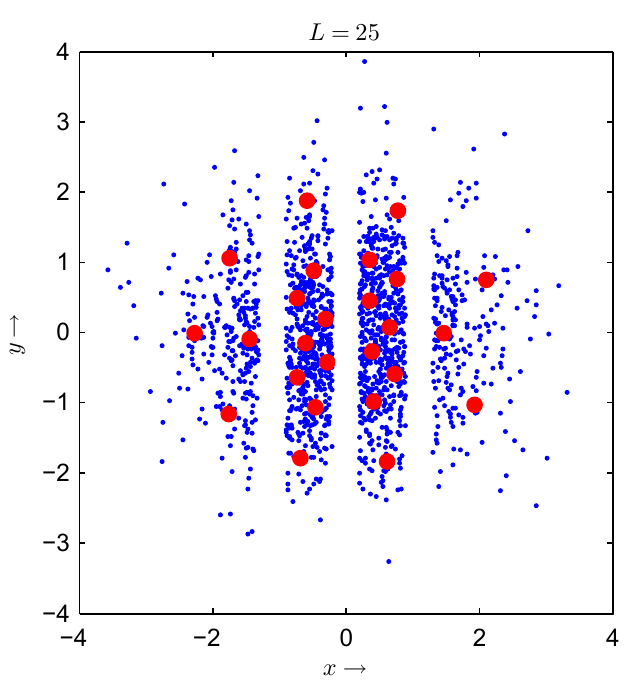}
	\hspace*{\fill}
	\includegraphics{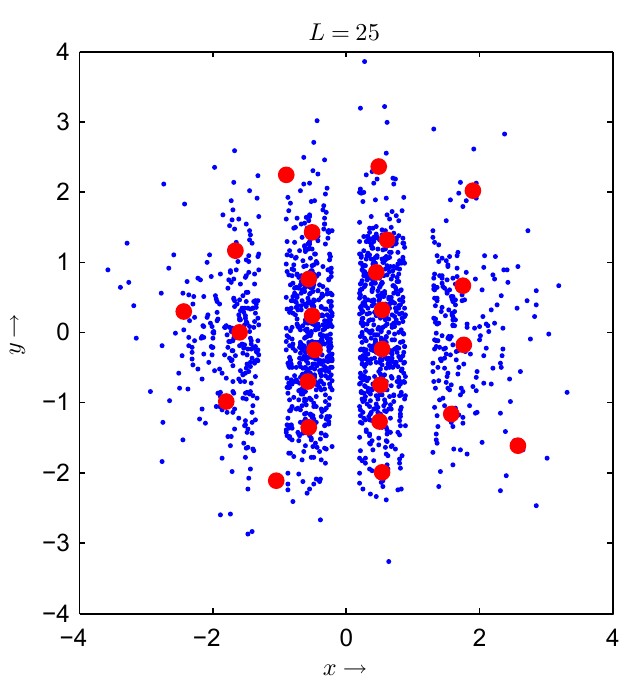}
	\hspace*{\fill}
	\caption{Blue: Random samples representing a \SND{} with some samples removed along three vertical lines. Red: Reduced point set. (left) LCD reduction. (right) k-means clustering.}
	\label{Fig_Zebra}
\end{figure*}

%%%%%%%%%%%%%%%%%%%%%%%%%%%%%%%%%%%%%%%%%%%%%%%%%%%%%%%%%%%%%%%%%%%%%%%%%%%%%%%%%%%%%%%%%%%%%%%%%%%%%%%%%%%%%%%%%%%%%%%%%%%%%%%%%%%%%%%%%%%%

\section{Discussion} \label{Sec_Conclusions}

    %
% New approach -> totally different
%

A systematic approach for approximating a given Dirac mixture density by another one with less components has been introduced that is radically different from current clustering or vector quantization approaches. The (weights and) locations of the approximating density are calculated by minimizing a global distance measure, a generalization of the well-known \CvMD{} to the multivariate case. This generalization is obtained by defining an alternative to the classical cumulative distribution, the \LCD{} (LCD), as a characterization of discrete random quantities, which is unique and symmetric also in the multivariate case. 

%
% Not a kernel method
%

Although kernels are used to define the LCD, this is not a kernel method. The distance measure is obtained by integrating over all possible kernels with all locations and widths, so that the final expression does not contain any kernel. 

%
% Given: random samples or Dirac mxiture
%

The given Dirac mixture might be the result from random sampling or from certain processing steps involving analytic Dirac mixtures. In any case, the resulting approximating Dirac mixture is fully deterministic and the optimization process gives reproducible results.

%
% Comparison to clustering
%

Compared to clustering methods that find cluster heads minimizing the distance to their nearest neighbors, which is a local method, the LCD reduction globally matches the mass distributions of the given point set and its approximation. This leads to a smooth distance measure with almost no local minima that can be efficiently minimized with standard optimization procedures. However, it is important to note that due to its operating principle the proposed reduction method does not provide a mapping from old components to new components.

%
% Exploit constraints
%

Constraints on the state variables can easily be considered when performing the approximation of the given density. An obvious application is the explicit avoidance of certain regions in the state space in order to obey certain physical constraints. However, an even more interesting application is the interpretation of the measurement equation in the Bayesian filtering step as an equality constraint for the state variables once an actual observation is available. This opens the way for more advanced filtering techniques in the case of Dirac mixture densities other than reweighing the individual component by the likelihood function.

%
% Large data sets -> hierarchical reduction
%

Large data sets occur when performing Dirac mixture based state estimation in high--dimensional spaces or when considering product spaces of Dirac mixture densities. For a very large number of components, the computational effort for performing a direct reduction might be too large. For coping with this complexity issue, the proposed approach offers the unique feature of hierarchical approximation. For that purpose, the data set is decomposed into several smaller sets that are individually approximated. The resulting \DC s of the individual approximations are then collected into a single approximating Dirac mixture, which subsequently is further approximated to yield the desired number of components. Of course, this approximation hierarchy may consist of more intermediate approximation steps.

%%%%%%%%%%%%%%%%%%%%%%%%%%%%%%%%%%%%%%%%%%%%%%%%%%%%%%%%%%%%%%%%%%%%%%%%%%%%%%%%%%%%%%%%%%%%%%%%%%%%%%%%%%%%%%%%%%%%%%%%%%%%%%%%%%%%%%%%%%%%

\section*{Acknowledgment}

	The author would like to thank Dipl.-Inform.~Henning Eberhardt for many fruitful discussions on this topic and the nice ideas for visualizing the performance of the proposed new reduction algorithm.

\bibliographystyle{StyleFiles/IEEEtran_Capitalize}

\bibliography{%
Literature,%
ISAS-Bibtex/ISASPublikationen,%
ISAS-Bibtex/ISASPublikationen_laufend,%
ISAS-Bibtex/ISASPreprints%
}

%%%%%%%%%%%%%%%%%%%%%%%%%%%%%%%%%%%%%%%%%%%%%%%%%%%%%%%%%%%%%%%%%%%%%%%%%%%%%%%%%%%%%%%%%%%%%%%%%%%%%%%%%%%%%%%%%%%%%%%%%%%%%%%%%%%%%%%%%%%%

\begin{appendix}
	
	\section{Gradient of Distance Measure} \label{Sec_Gradient}
	
		%
% Necessary condition: gradient
%

Taking the derivative of the distance measure in \Eq{Eq_Distance_Measure_bmax} with respect to a location $x_\xi^{(\eta)}$ gives
\begin{equation*}
G_\xi^{(\eta)} = \frac{\partial D}{\partial x_\xi^{(\eta)}}
= - 2 \int _0^{b_\text{max}} \frac{1}{b^{N-1}} \int _{\NewR^N} 
\left( \tilde{F}(\vec{m}, b) - F(\vec{m}, b) \right ) \frac{\partial F(\vec{m}, b)}{\partial x_\xi^{(\eta)}} \, d \vec{m} \, d b \enspace ,
\end{equation*}
with
\begin{equation*}
\frac{\partial F(\vec{m}, b)}{\partial x_\xi^{(\eta)}} = 
- w_\xi^x \, \frac{x_\xi^{(\eta)} - m^{(\eta)}}{b^2} \prod_{k=1}^N \exp\left( -\frac{1}{2} \frac{ \left(x_\xi^{(k)}-m^{(k)} \right)^2 }{b^2} \right) 
\enspace .
\end{equation*}

In the following, the two parts of $G_\xi^{(\eta)}$ will be treated seperatly according to
\begin{equation}
G_\xi^{(\eta)} = G_\xi^{(\eta,1)} - G_\xi^{(\eta,2)} \enspace{.}
\label{Eq_G_G1_G2}
\end{equation}
The first part is given by
\begin{equation*}
G_\xi^{(\eta,1)} = - 2 \int _0^{b_\text{max}} \frac{1}{b^{N-1}} \int _{\NewR^N} 
\tilde{F}(\vec{m}, b) \frac{\partial F(\vec{m}, b)}{\partial x_\xi^{(\eta)}} \, d \vec{m} \, d b \enspace .
\end{equation*}
By using the expressions for $\tilde{F}(\vec{m}, b)$ and $\frac{\partial F(\vec{m}, b)}{\partial x_\xi^{(\eta)}}$, we obtain
\begin{equation*}
\begin{split}
G_\xi^{(\eta,1)} = 2 w_\xi^x \, \int _0^{b_\text{max}} \frac{1}{b^{N-1}} \int _{\NewR^N}
\frac{x_\xi^{(\eta)} - m^{(\eta)}}{b^2} & \prod_{k=1}^N \exp\left( -\frac{1}{2} \frac{ \left(x_\xi^{(k)}-m^{(k)} \right)^2 }{b^2} \right) \\
\sum_{i=1}^M w_i^{y} & \prod_{k=1}^N \exp\left( -\frac{1}{2} \frac{(y_i^{(k)}-m^{(k)})^2}{b^2} \right) d \vec{m} \, d b
\end{split}
\enspace .
\end{equation*}
Combining the product terms gives
\begin{equation*}
\begin{split}
G_\xi^{(\eta,1)} = 2 \, w_\xi^x \sum_{i=1}^M w_i^{y} \int _0^{b_\text{max}} \frac{1}{b^{N-1}}
\int _{\NewR} \frac{x_\xi^{(\eta)} - m^{(\eta)}}{b^2} & \exp\left( -\frac{1}{2} \frac{ \left(x_\xi^{(\eta)}-m^{(\eta)} \right)^2 }{b^2} \right) \\
& \exp\left( -\frac{1}{2} \frac{(y_i^{(\eta)}-m^{(\eta)})^2}{b^2} \right) d m^{(\eta)} \\
\prod_{\stackrel{k=1}{k \neq \eta}}^N \int _{\NewR} & \exp\left( -\frac{1}{2} \frac{ \left(x_\xi^{(k)}-m^{(k)} \right)^2 }{b^2} \right) \\
& \exp\left( -\frac{1}{2} \frac{(y_i^{(k)}-m^{(k)})^2}{b^2} \right) d m^{(k)} d b
\end{split}
\enspace .
\end{equation*}

For further simplification, the equality
\begin{equation*}
\begin{gathered}
\int _{\NewR} \frac{z_i-m}{b^2} 
\exp\left( -\frac{1}{2} \frac{(z_i - m)^2}{b^2} \right) \exp\left( -\frac{1}{2} \frac{(z_j - m)^2}{b^2} \right) d m \\
= \sqrt{\pi} \, \frac{z_i - z_j}{2 \, b} \exp \left( -\frac{1}{2} \frac{(z_i - z_j)^2}{2 \, b^2} \right)
\end{gathered}
\end{equation*}
is used together with the equality \Eq{Eq_Simplification_Product_of_Exp}, which leads to
\begin{equation*}
\begin{split}
G_\xi^{(\eta,1)} = & \pi^{\frac{N}{2}} \, w_\xi^x \sum_{i=1}^M w_i^{y} \left( x_\xi^{(\eta)} - y_i^{(\eta)} \right) \\
& \int _0^{b_\text{max}} \frac{1}{b} \prod_{k=1}^N \exp\left( -\frac{1}{2} \frac{ \left(x_\xi^{(k)}-y_i^{(k)} \right)^2 }{2 \, b^2} \right) d b
\end{split}
\end{equation*}
or equivalently to
\begin{equation*}
\begin{split}
G_\xi^{(\eta,1)} = & \pi^{\frac{N}{2}} \, w_\xi^x \sum_{i=1}^M w_i^{y} \left( x_\xi^{(\eta)} - y_i^{(\eta)} \right) \\
& \int _0^{b_\text{max}} \frac{1}{b} \exp\left( -\frac{1}{2} \frac{ \displaystyle\sum_{k=1}^N \left(x_\xi^{(k)}-y_i^{(k)} \right)^2 }{2 \, b^2} \right) d b
\end{split}
\enspace .
\end{equation*}

With
\begin{equation*}
\int _0^{b_\text{max}} \frac{1}{b} \, \exp\left( -\frac{1}{2} \frac{ z }{2 \, b^2} \right) \, d b
= - \frac{1}{2} \, \Ei\left( -\frac{1}{2} \frac{ z }{2 \, b_\text{max}^2}
 \right)
\end{equation*}
for $z>0$, the expression
\begin{equation*}
\frac{\partial D}{\partial x_\xi^{(\eta)}} = \frac{\pi^{\frac{N}{2}}}{2} w_\xi^{x}
\sum_{i=1}^L w_i^{x} \left( x_\xi^{(\eta)} - x_i^{(\eta)} \right) 
\Ei \left( -\frac{1}{2} \frac{ \displaystyle\sum_{k=1}^N \left( x_\xi^{(k)} - y_i^{(k)} \right)^2 }{2 b_\text{max}^2} \right)
\end{equation*}
is obtained for component index $\xi=1, \ldots, L$ and dimension index $k=1, \ldots, N$.

The second part is given by
\begin{equation*}
G_\xi^{(\eta,2)} = - 2 \int _0^{b_\text{max}} \frac{1}{b^{N-1}} \int _{\NewR^N} 
F(\vec{m}, b) \frac{\partial F(\vec{m}, b)}{\partial x_\xi^{(\eta)}} \, d \vec{m} \, d b \enspace .
\end{equation*}
By using the expressions for $F(\vec{m}, b)$ and $\frac{\partial F(\vec{m}, b)}{\partial x_\xi^{(\eta)}}$, we obtain
\begin{equation*}
\begin{split}
G_\xi^{(\eta,2)} = 2 w_\xi^x \, \int _0^{b_\text{max}} & \frac{1}{b^{N-1}} \int _{\NewR^N} \\
\frac{x_\xi^{(\eta)} - m^{(\eta)}}{b^2} & \prod_{k=1}^N \exp\left( -\frac{1}{2} \frac{ \left(x_\xi^{(k)}-m^{(k)} \right)^2 }{b^2} \right) \\
\sum_{i=1}^L w_i^{x} & \prod_{k=1}^N \exp\left( -\frac{1}{2} \frac{(x_i^{(k)}-m^{(k)})^2}{b^2} \right) d \vec{m} \, d b
\end{split}
\enspace .
\end{equation*}
Combining the product terms gives
\begin{equation*}
\begin{split}
G_\xi^{(\eta,2)} = 2 \, w_\xi^x 
\sum_{i=1}^L w_i^{x} \int _0^{b_\text{max}} \frac{1}{b^{N-1}}
\int _{\NewR} \frac{x_\xi^{(\eta)} - m^{(\eta)}}{b^2} & \exp\left( -\frac{1}{2} \frac{ \left(x_\xi^{(\eta)}-m^{(\eta)} \right)^2 }{b^2} \right) \\
& \exp\left( -\frac{1}{2} \frac{(x_i^{(\eta)}-m^{(\eta)})^2}{b^2} \right) d m^{(\eta)} \\
\prod_{\stackrel{k=1}{k \neq \eta}}^N \int _{\NewR} & \exp\left( -\frac{1}{2} \frac{ \left(x_\xi^{(k)}-m^{(k)} \right)^2 }{b^2} \right) \\
& \exp\left( -\frac{1}{2} \frac{(x_i^{(k)}-m^{(k)})^2}{b^2} \right) d m^{(k)} d b
\end{split}
\enspace .
\end{equation*}
For further simplification, we use
\begin{equation*}
\int _{\NewR} \frac{z_i-m}{b^2} 
\exp\left( -\frac{1}{2} \frac{(z_i - m)^2}{b^2} \right) \exp\left( -\frac{1}{2} \frac{(z_j - m)^2}{b^2} \right) d m
= \sqrt{\pi} \, \frac{z_i - z_j}{2 \, b} \exp \left( -\frac{1}{2} \frac{(z_i - z_j)^2}{2 \, b^2} \right)
\end{equation*}
and \Eq{Eq_Simplification_Product_of_Exp}, which leads to
\begin{equation*}
G_\xi^{(\eta,2)} = \pi^{\frac{N}{2}} \, w_\xi^x \sum_{i=1}^L w_i^{x} \left( x_\xi^{(\eta)} - x_i^{(\eta)} \right)
\int _0^{b_\text{max}} \frac{1}{b} \prod_{k=1}^N \exp\left( -\frac{1}{2} \frac{ \left(x_\xi^{(k)}-x_i^{(k)} \right)^2 }{2 \, b^2} \right) d b \enspace .
\end{equation*}
or equivalently
\begin{equation*}
G_\xi^{(\eta,2)} = \pi^{\frac{N}{2}} \, w_\xi^x \sum_{i=1}^L w_i^{x} \left( x_\xi^{(\eta)} - x_i^{(\eta)} \right)
\int _0^{b_\text{max}} \frac{1}{b} \exp\left( -\frac{1}{2} \frac{ \displaystyle\sum_{k=1}^N \left(x_\xi^{(k)}-x_i^{(k)} \right)^2 }{2 \, b^2} \right) d b 
\enspace .
\end{equation*}
With
\begin{equation*}
\int _0^{b_\text{max}} \frac{1}{b} \, \exp\left( -\frac{1}{2} \frac{ z }{2 \, b^2} \right) \, d b
= - \frac{1}{2} \, \Ei\left( -\frac{1}{2} \frac{ z }{2 \, b_\text{max}^2}
 \right)
\end{equation*}
for $z>0$, we obtain
\begin{equation*}
G_\xi^{(\eta,2)} = \frac{\pi^{\frac{N}{2}}}{2} w_\xi^{x}
- \sum_{i=1}^M w_i^{y} \left( x_\xi^{(\eta)} - y_i^{(\eta)} \right) 
\Ei \left( -\frac{1}{2} \frac{ \displaystyle\sum_{k=1}^N \left( x_\xi^{(k)} - x_i^{(k)} \right)^2 }{2 b_\text{max}^2} \right)
\end{equation*}
for component index $\xi=1, \ldots, L$ and dimension index $k=1, \ldots, N$.

By combining the two results for $G_\xi^{(\eta,1)}$ and $G_\xi^{(\eta,2)}$ according to \Eq{Eq_G_G1_G2}, we obtain the following Theorem.

\begin{theorem}
The gradient of the general distance measure in Theorem~\ref{Theorem_General_Distance_Measure} with respect to the locations of the Dirac components is given by
\begin{equation*}
\begin{split}
\frac{\partial D}{\partial x_\xi^{(\eta)}} = \frac{\pi^{\frac{N}{2}}}{2} w_\xi^{x}
\Bigg\{ \sum_{i=1}^L & w_i^{x} \left( x_\xi^{(\eta)} - y_i^{(\eta)} \right) 
\Ei \left( -\frac{1}{2} \frac{ \displaystyle\sum_{k=1}^N \left( x_\xi^{(k)} - y_i^{(k)} \right)^2 }{2 b_\text{max}^2} \right) \\
- \sum_{i=1}^M & w_i^{y} \left( x_\xi^{(\eta)} - x_i^{(\eta)} \right) 
\Ei \left( -\frac{1}{2} \frac{ \displaystyle\sum_{k=1}^N \left( x_\xi^{(k)} - x_i^{(k)} \right)^2 }{2 b_\text{max}^2} \right) \Bigg\}
\end{split}
\end{equation*}
for component index $j=1, \ldots, L$ and dimension index $k=1, \ldots, N$.
\label{Theorem_Exact_Gradient}
\end{theorem}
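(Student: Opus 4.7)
The plan is to start from the squared-difference form of the distance in \Eq{Eq_Distance_Measure_bmax}. Only $F(\vec{m},b)$ depends on $x_\xi^{(\eta)}$, so differentiating under the integral sign yields
\begin{equation*}
\frac{\partial D}{\partial x_\xi^{(\eta)}} = -2 \int_0^{b_{\text{max}}} \frac{1}{b^{N-1}} \int_{\NewR^N} \bigl(\tilde F(\vec{m},b) - F(\vec{m},b)\bigr) \frac{\partial F(\vec{m},b)}{\partial x_\xi^{(\eta)}} \, d\vec{m}\, db,
\end{equation*}
which I split as $G_\xi^{(\eta,1)} - G_\xi^{(\eta,2)}$ according to the two LCDs entering the first factor. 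The partial derivative $\partial F/\partial x_\xi^{(\eta)}$ singles out only the $\xi$-th Dirac component and pulls out the factor $-w_\xi^x(x_\xi^{(\eta)}-m^{(\eta)})/b^2$ times the Gaussian kernel centered at $\vec{x}_\xi$, so each $G_\xi^{(\eta,\cdot)}$ becomes a weighted sum over the interactions of the $\xi$-th component with the $\vec{y}_i$'s (respectively the $\vec{x}_i$'s).

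After exchanging summation and integration, the integrand factorizes across the $N$ spatial coordinates. For every coordinate $k \neq \eta$ the inner integral is the standard Gaussian-product identity \Eq{Eq_Simplification_Product_of_Exp}, giving $\sqrt{\pi}\,b\,\exp\!\bigl(-(z_i-z_j)^2/(4b^2)\bigr)$. For the distinguished coordinate $k=\eta$ the extra linear factor $(x_\xi^{(\eta)}-m^{(\eta)})/b^2$ must be carried through; a one-step completion of the square, or equivalently differentiating the previous identity with respect to $z_i$, yields $\sqrt{\pi}\,(z_i-z_j)/(2b)\,\exp\!\bigl(-(z_i-z_j)^2/(4b^2)\bigr)$. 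Multiplying the $N-1$ factors of $b$ by the single factor of $1/b$ and by the outer weight $1/b^{N-1}$ cancels all $N$-dependence in the powers of $b$ and leaves a universal inner factor of $1/b$ multiplying a joint Gaussian in $b$ whose exponent is $-\sum_k (x_\xi^{(k)} - z_i^{(k)})^2/(4b^2)$.

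The remaining $b$-integral is then $\int_0^{b_{\text{max}}} (1/b)\exp(-z/(4b^2))\,db = -\tfrac{1}{2}\,\Ei\!\bigl(-z/(4b_{\text{max}}^2)\bigr)$, which I would verify via the substitution $u=z/(4b^2)$ that converts the integrand to $e^{-u}/u$, matching the definition of $\Ei$. Assembling these pieces into $G_\xi^{(\eta,1)}$ yields a sum over the components $\vec{y}_i$ with weights $w_i^y$, while assembling them into $G_\xi^{(\eta,2)}$ yields a sum over the components $\vec{x}_i$ with weights $w_i^x$; their difference $G_\xi^{(\eta,1)} - G_\xi^{(\eta,2)}$ is exactly the expression stated in the theorem.

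The main obstacle is not any single step but the careful bookkeeping that matches the $1/b^{N-1}$ outer weight to the exact power of $b$ generated by the $N$ one-dimensional $\vec m$-integrals, so that the reduction to the universal $\Ei$-integral happens cleanly and independently of $N$. A secondary subtlety is that $\Ei$ is defined by a principal-value integral, but here its argument is nonpositive since $\sum_k (x_\xi^{(k)} - z_i^{(k)})^2 \ge 0$, and the degenerate case $z = 0$ is harmless because the accompanying pre-factor $x_\xi^{(\eta)} - z_i^{(\eta)}$ vanishes with it, so each summand admits a finite, unambiguous limit.
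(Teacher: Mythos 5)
Your proposal follows essentially the same route as the paper's own derivation: differentiating under the integral sign, splitting into the two parts $G_\xi^{(\eta,1)}-G_\xi^{(\eta,2)}$, factorizing the $\vec{m}$-integral coordinate-wise with the Gaussian-product identity and its differentiated version for the distinguished coordinate $\eta$, and reducing the $b$-integral to $-\tfrac{1}{2}\Ei\bigl(-z/(4b_\text{max}^2)\bigr)$. The bookkeeping of the powers of $b$ is correct, and your added remark on the harmless $z=0$ diagonal term is a detail the paper leaves implicit.
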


For large $b_\text{max}$, the $\Ei$--function in Theorem~\ref{Theorem_Exact_Gradient} can be approximated according to \Eq{Eq_Ei_Approx}. Hence, we have
\begin{equation*}
\begin{split}
\Ei\left( -\frac{z}{4 \, b_\text{max}^2} \right) 
& \approx \Gamma - \frac{z}{4 \, b_\text{max}^2} + \log\left( \frac{z}{4 \, b_\text{max}^2} \right) \\
& \approx \Gamma - \log\left( 4 \, b_\text{max}^2 \right) + \log(z) \\
& = - C_b + \log(z)
\end{split}
\end{equation*}
for $z>0$. With
\begin{equation*}
\begin{gathered}
- C_b \left\{ \sum_{i=1}^L w_i^{x} \left( x_\xi^{(\eta)} - x_i^{(\eta)} \right) 
- \sum_{i=1}^M w_i^{y} \left( x_\xi^{(\eta)} - y_i^{(\eta)} \right) \right\} \\
= C_b \left( \sum_{i=1}^L w_i^{x} x_i^{(\eta)} - \sum_{i=1}^M w_i^{y} y_i^{(\eta)} \right) \enspace ,
\end{gathered}
\end{equation*}
we obtain the next Theorem.
\begin{theorem} \label{Theorem_Approximate_Gradient}
For large $b_\text{max}$, the gradient of the distance measure with respect to the locations of the Dirac components is given by
\begin{equation}
\begin{split}
\frac{\partial D}{\partial x_\xi^{(\eta)}} = \frac{\pi^\frac{N}{2}}{2} w_\xi^{x}
& \left\{ \sum_{i=1}^L w_i^{x} \left( x_\xi^{(\eta)} - x_i^{(\eta)} \right) \log \left( \sum_{k=1}^N \left( x_\xi^{(k)} - x_i^{(k)} \right)^2 \right) \right.\\
- & \sum_{i=1}^M w_i^{y} \left( x_\xi^{(\eta)} - y_i^{(\eta)} \right) \log \left( \sum_{k=1}^N \left( x_\xi^{(k)} - y_i^{(k)} \right)^2 \right) \\
+ &  C_b \cdot \left. \left( \sum_{i=1}^L w_i^{x} x_i^{(\eta)} - \sum_{i=1}^M w_i^{y} y_i^{(\eta)} \right) \right\} \enspace ,
\end{split}
\label{Eq_Approximate_Gradient}
\end{equation}
for component index $\xi=1, \ldots, L$ and dimension index $\eta=1, \ldots, N$.
\end{theorem}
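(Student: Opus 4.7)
The plan is to obtain \Theorem{Theorem_Approximate_Gradient} as a direct asymptotic consequence of \Theorem{Theorem_Exact_Gradient}. The exact gradient is already expressed as a sum over pairs of components times $\Ei(-z/(4\,b_\text{max}^2))$, where $z$ denotes the relevant squared Euclidean distance. So the task reduces to substituting the small-argument approximation of $\Ei$ into each pairwise contribution and then collecting the resulting terms.

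First, I would argue that for large $b_\text{max}$ the argument $z/(4\,b_\text{max}^2)$ is small whenever $z>0$, so the approximation \Eq{Eq_Ei_Approx} applies and yields
\begin{equation*}
\Ei\!\left(-\frac{z}{4\,b_\text{max}^2}\right) \approx \Gamma - \frac{z}{4\,b_\text{max}^2} + \log\!\left(\frac{z}{4\,b_\text{max}^2}\right).
\end{equation*}
The linear-in-$z$ correction is $O(1/b_\text{max}^2)$ and can be dropped in the asymptotic regime, while the logarithm splits as $\log z - \log(4\,b_\text{max}^2)$. Together this gives $\Ei(-z/(4\,b_\text{max}^2)) \approx -C_b + \log(z)$ with $C_b = \log(4\,b_\text{max}^2) - \Gamma$, exactly matching the form used in the constant $C_b$ of \Theorem{Theorem_General_Distance_Measure_2}.

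Next, I would insert this approximation separately into the two sums of \Theorem{Theorem_Exact_Gradient}. The $\log(z)$ contributions produce the first two lines of \Eq{Eq_Approximate_Gradient} directly, with $z = \sum_k (x_\xi^{(k)}-x_i^{(k)})^2$ or $z = \sum_k (x_\xi^{(k)}-y_i^{(k)})^2$ as appropriate. The $-C_b$ contributions combine into
\begin{equation*}
-C_b\!\left\{\sum_{i=1}^L w_i^x(x_\xi^{(\eta)} - x_i^{(\eta)}) - \sum_{i=1}^M w_i^y(x_\xi^{(\eta)} - y_i^{(\eta)})\right\},
\end{equation*}
and, since both sets of weights sum to one, the terms involving $x_\xi^{(\eta)}$ cancel, leaving $C_b\bigl(\sum_i w_i^x x_i^{(\eta)} - \sum_i w_i^y y_i^{(\eta)}\bigr)$, which is the third line of \Eq{Eq_Approximate_Gradient}.

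There is no real analytical obstacle here; the result is essentially a bookkeeping exercise combining the asymptotic expansion of $\Ei$ with the cancellation that comes from the weights being probabilities. The only care needed is in tracking the signs and the indices correctly across the $G_\xi^{(\eta,1)}$ and $G_\xi^{(\eta,2)}$ terms from \Eq{Eq_G_G1_G2}, and in confirming that the dropped $O(1/b_\text{max}^2)$ remainder is consistent with treating the expression as an asymptotic formula rather than an identity, which is the same regime already used in \Theorem{Theorem_General_Distance_Measure_2} for the distance itself.
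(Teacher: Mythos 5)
Your proposal follows the paper's own derivation exactly: both substitute the small-argument expansion \Eq{Eq_Ei_Approx} of the exponential integral into the exact gradient of \Theorem{Theorem_Exact_Gradient}, drop the $O(1/b_\text{max}^2)$ remainder so that $\Ei(-z/(4\,b_\text{max}^2)) \approx -C_b + \log(z)$, and then use the normalization of the weights to cancel the $x_\xi^{(\eta)}$ terms in the $-C_b$ contribution, leaving $C_b\bigl(\sum_i w_i^x x_i^{(\eta)} - \sum_i w_i^y y_i^{(\eta)}\bigr)$. This is the same argument the paper gives, so no further comparison is needed.
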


\begin{corollary}
For equal expected values, the optimal locations $\vec{x}_i=\begin{bmatrix} x_i, y_i\end{bmatrix}^T$ of the components $i=1, \ldots, L$ of the approximating Dirac mixture density are obtained by solving the following $N \cdot L$ equations (necessary conditions)
\begin{equation}
\begin{split}
  & \sum_{i=1}^M w_i^{y} \left( x_\xi^{(\eta)} - y_i^{(\eta)} \right) \log \left( \sum_{k=1}^N \left( x_\xi^{(k)} - y_i^{(k)} \right)^2 \right) \\
= & \sum_{i=1}^L w_i^{x} \left( x_\xi^{(\eta)} - x_i^{(\eta)} \right) \log \left( \sum_{k=1}^N \left( x_\xi^{(k)} - x_i^{(k)} \right)^2 \right)
\enspace ,
\end{split}
\label{Corollary_System_of_Equations}
\end{equation}
for component index $\xi=1, \ldots, L$ and dimension index $\eta=1, \ldots, N$.
\end{corollary}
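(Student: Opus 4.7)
The plan is to derive the stated necessary condition by taking the closed-form gradient from \Theorem{Theorem_Approximate_Gradient} and imposing two reductions in sequence: first-order optimality (the gradient vanishes at a minimum) and the equal-expected-values hypothesis.

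First I would invoke \Theorem{Theorem_Approximate_Gradient} directly to write $\partial D / \partial x_\xi^{(\eta)}$ as a sum of three terms multiplied by the strictly positive prefactor $\frac{\pi^{N/2}}{2} w_\xi^{x}$. Because $D$ is smooth and the minimization over the locations is unconstrained in the interior, an interior minimum must satisfy $\partial D / \partial x_\xi^{(\eta)} = 0$ for every $\xi = 1,\ldots,L$ and every $\eta = 1,\ldots,N$. Since $w_\xi^{x} > 0$, this prefactor cancels and only the bracketed three-term expression is required to vanish.

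Next I would use the equal-means hypothesis $\sum_{i=1}^L w_i^{x} x_i^{(\eta)} = \sum_{i=1}^M w_i^{y} y_i^{(\eta)}$ to eliminate the third summand, which is exactly $C_b$ times this difference. This is the same algebraic cancellation highlighted in the remark following \Theorem{Theorem_General_Distance_Measure_2}, and it is an \emph{exact} identity in the assumed setting rather than merely an asymptotic statement, even though the surrounding gradient expression is only accurate for large $b_\text{max}$. After the cancellation, the remaining two sums are precisely the $x$--$x$ self-interaction log-terms and the $x$--$y$ cross-interaction log-terms; transposing one side onto the other yields the $N \cdot L$ equations displayed in \Eq{Corollary_System_of_Equations}, indexed by $\xi$ and $\eta$.

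The main potential pitfall is simply bookkeeping: keeping track of signs and of which sum sits on which side of the identity when the gradient is rearranged, and confirming that the $C_b$-contribution vanishes exactly rather than only to leading order. Neither is a genuine analytic obstacle, since all the hard work (performing the $\vec{m}$- and $b$-integrations, and producing the approximation that replaces the exponential integral by a logarithm) was already carried out in the proofs of \Theorem{Theorem_General_Distance_Measure_2} and \Theorem{Theorem_Approximate_Gradient}. The corollary is therefore essentially a direct specialization of those results to stationary points under the equal-means constraint.
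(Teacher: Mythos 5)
Your plan is correct and coincides with the paper's (implicit) derivation: the corollary is obtained exactly by setting the gradient of \Theorem{Theorem_Approximate_Gradient} to zero, cancelling the positive prefactor $\tfrac{\pi^{N/2}}{2}w_\xi^{x}$, and noting that the $C_b$-term vanishes identically under the equal-means hypothesis, leaving the displayed $N\cdot L$ equations after rearrangement. No gaps.
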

	
\end{appendix}

\end{document}